\newcommand{\ignore}[1]{}
\newcommand{\goesto}[1]{\xrightarrow{#1}}
\newcommand{\lang}[1]{L(#1)}
\newcommand{\from}{\leftarrow}
\newcommand{\A}{\mathcal A}
\newcommand{\B}{\mathcal B}
\newcommand{\N}{\mathbb N}
\newcommand{\Z}{\mathbb Z}
\newcommand{\Q}{\mathbb Q}
\newcommand{\R}{\mathbb R}
\newcommand{\sep}{\mid}
\newcommand{\tuple}[1]{(#1)}
\newcommand{\set}[1]{\left\{ #1 \right\}}
\newcommand\limplies\to
\newif\ifstartedinmathmode
\newcommand*{\st}{
  \relax\ifmmode\startedinmathmodetrue\else\startedinmathmodefalse\fi
  \ifstartedinmathmode{\;.\;}\else{s.t.~}\fi%
}
\newcommand{\card}[1]{|#1|}
\newcommand{\NLOGSPACE}{\textsf{NLOGSPACE}\xspace}
\newcommand{\PTIME}{\textsf{PTIME}\xspace}
\newcommand{\NP}{\textsf{NP}\xspace}
\newcommand{\PSPACE}{\textsf{PSPACE}\xspace}
\newcommand{\PSPACEc}{\textsf{PSPACE}-c.\xspace}
\newcommand{\EXPTIME}{\textsf{EXPTIME}\xspace}
\newcommand{\EXPTIMEc}{\textsf{EXPTIME}-c.\xspace}
\newcommand{\UNDEC}{undec.}
\newcommand{\NCk}[1]{$\textsf{NC}^{#1}$}
\newcommand{\UUCFL}{\textsf{UUCFG}\xspace}
\newcommand{\UUCFG}{\textsf{UUCFG}\xspace}
\newcommand{\SQRTSUM}{\textsf{SQRTSUM}\xspace}
\newcommand{\DFA}{\textrm{DFA}\xspace}
\newcommand{\NFA}{\textrm{NFA}\xspace}
\newcommand{\UFA}{\textrm{UFA}\xspace}
\newcommand{\DCFG}{\textrm{DCFG}\xspace}
\newcommand{\UCFG}{\textrm{UCFG}\xspace}
\newcommand{\CFG}{\textrm{CFG}\xspace}
\newcommand{\CFGs}{\textrm{CFGs}\xspace}
\newcommand{\DCFL}{\textrm{DCFL}\xspace}
\newcommand{\SCFG}{\textrm{SCFG}\xspace}
\newcommand{\SH}{\PTIME \cite{StearnsHunt:Unambiguous}}
\newcommand{\abs}[1]{\left|#1\right|}
\newcommand{\divides}{\,|\,}
\newcommand{\op}{\mathsf{op}}
\newcommand{\conv}{*}
\newcommand\convpoly[2]{#1_{\conv}[#2]} 
\newcommand\poly[2]{#1[#2]} 
\newcommand\coefficient[2]{[#1]#2}
\newcommand{\stkout}[1]{\ifmmode\text{\sout{\ensuremath{#1}}}\else\sout{#1}\fi}
\newtheorem{theorem}{Theorem}
\newtheorem{lemma}[theorem]{Lemma}
\theoremstyle{remark}
\newtheorem*{claim*}{Claim}
\title{On the Complexity of the Universality and Inclusion Problems for Unambiguous Context-Free Grammars}
\author{Lorenzo Clemente
\institute{Department of Mathematics, Informatics, and Mechanics (MIMUW)}
\institute{University of Warsaw (Warsaw, Poland)%
\thanks{This work has been partially supported by the Polish NCN grant 2017/26/D/ST6/00201.
A full technical report is available \cite{techrep}.}}
\email{clementelorenzo@gmail.com}
}
\begin{document}
\maketitle

\begin{abstract}
	We study the computational complexity of universality and inclusion problems
	for unambiguous finite automata and context-free grammars.
	We observe that several such problems can be reduced to the universality problem for unambiguous context-free grammars.
	The latter problem has long been known to be decidable
	and we propose a \PSPACE algorithm
	that works by reduction to the zeroness problem of recurrence equations with convolution.
	We are not aware of any non-trivial complexity lower bounds.
	However, we show that computing the coin-flip measure of an unambiguous context-free language,
	a quantitative generalisation of universality,
	is hard for the long-standing open problem \SQRTSUM.
\end{abstract}

\section{Introduction}

The purpose of this note is to attract attention
to a long-standing open problem in formal language theory.
The problem in question is the exact complexity of deciding universality of unambiguous context-free grammars (\UUCFG).
A context-free grammar is \emph{unambiguous} if every accepted word
admits a unique parse tree,
and the universality problems asks,
for a given grammar $G$ over a finite set of terminals $\Sigma$ (alphabet),
whether $G$ accepts every word $\lang G = \Sigma^*$.
While the universality problem for context-free grammars is undecidable \cite{HopcroftUllman:1979},
the same problem for unambiguous grammars is long-known to be decidable
(a corollary of \cite[Theorem 5.5]{SalomaaSoittola:Book:PowerSeries:1978}),
e.g., by reducing to the first-order theory of the reals with one quantifier alternation
\cite[eq.~(3), page 149]{SalomaaSoittola:Book:PowerSeries:1978}.
Since the latter fragment is decidable in \EXPTIME \cite{Grigorev:JSC:1988},
this yields an \EXPTIME upper bound for \UUCFG.
%
%
%
%
No non-trivial lower bound for \UUCFG seems to be known in the literature.
%

The typical way to solve a containment problem of the form $L \subseteq M$
is to complement $M$ and solve $L \cap (\Sigma^* \setminus M) = \emptyset$.
For instance, when $L$ is regular and $M$ is deterministic context-free (\DCFG),
this gives a \PTIME procedure since \DCFG languages are efficiently closed under complement
and intersection with regular languages,
and their emptiness problem is in \PTIME.
However, \UCFG languages are not closed under complement
(the complement is not even context-free in general \cite{Hibbard:Ullian:UCFL:1966}),
so the language-theoretic approach is not available.
As Salomaa and Soittola remark in their book from 1978,
``no proof is known for Theorem 5.5 which uses only standard formal language theory''.
To this day, we are not aware of a proof of decidability for \UUCFG using different techniques%
\footnote{
    In a later book, Kuich and Salomaa reprove decidability \cite[Corollary 16.25]{KuichSalomaa:1986}
    by using variable elimination,
    which is arguably closer to algebraic geometry than formal languages.
}.
The \UUCFG problem is not isolated in this respect.

\paragraph{State of the art.}

Let $\A, \B$ be two classes of language acceptors.
Examples include deterministic (\DFA), unambiguous (\UFA), and nondeterministic finite automata (\NFA),
and similarly for context-free grammars we have the classes \DCFG, \UCFG, and \CFG.
The ``$\A \subseteq \B$'' \emph{inclusion problem} asks, 
given a language acceptor $A$ from $\A$ and $B$ from $\B$,
whether the languages they recognise satisfy $\lang A \subseteq \lang B$.
A summary of decidability and complexity result for inclusion problems
involving finite automata and grammars
is presented in \cref{fig:table}.
Many entries in the table are well-known.
The problem $\NFA \subseteq \NFA$ is a classic \PSPACE-complete problem \cite{MeyerStockmeyer:Equivalence:1972}.
The problem $\UFA \subseteq \UFA$ was shown in \PTIME
by Stearns and Hunt in their seminal paper \cite{StearnsHunt:Unambiguous}%
\footnote{
    An incomparable \NCk 2 upper bound for this problem is also known
    \cite[Fact 4.5]{MassazzaSabadini:TAPSOFT:1989} (c.f.~\cite[Theorem 2]{Tzeng:IPL:1996}).
}.
The fact that $\CFG \subseteq \NFA$ is \EXPTIME-complete is somewhat less known \cite[Theorem 2.1]{Kasai:Iwata:1992}.
%
%
%
%
%
The inclusion problems $\A \subseteq \UFA$ when $\B$ is $\DCFG$, $\UCFG$, or $\CFG$
do not appear to have been studied before.
The $\A \subseteq \B$ problem is undecidable as soon as both $\A, \B$ are context-free grammars,
since $\DCFG \subseteq \DCFG$ is well-known to be undecidable \cite[Theorem 10.7, Point 2]{HopcroftUllman:1979}.
We have already observed that $\NFA \subseteq \DCFG$ is in \PTIME.
The equivalence problem $\NFA = \UCFG$
is shown to be decidable in \cite[Theorem 5.5]{SalomaaSoittola:Book:PowerSeries:1978},
although no complexity bound is given.
The more general inclusion $\NFA \subseteq \UCFG$ does not seem to have been studied before.

%

%


\begin{figure}
    \begin{center}
    	\begin{tabular}{l|l|l|l|l|l|l}
    		${\subseteq}$\hfill	& \DFA		& \UFA			& \NFA 		& \DCFG		& \UCFG		& \CFG \\
    		\hline
    		\DFA				& \PTIME	& \PTIME		& \PSPACEc \cite{MeyerStockmeyer:Equivalence:1972}	& \PTIME	& =\UUCFL (Th.~\ref{thm:NFA:UCFG})	& \UNDEC \\
    		\UFA				& \PTIME    & \SH 			& \PSPACEc \cite{MeyerStockmeyer:Equivalence:1972}	& \PTIME	& =\UUCFL (Th.~\ref{thm:NFA:UCFG})	& \UNDEC \\
    		\NFA				& \PTIME	& \PTIME (Th.~\ref{thm:NFA:UFA}) & \PSPACEc \cite{MeyerStockmeyer:Equivalence:1972}	& \PTIME	& =\UUCFL  (Th.~\ref{thm:NFA:UCFG})	& \UNDEC \\
    		\DCFG				& \PTIME	& $\leq$\UUCFL	(Th.~\ref{thm:CFG:UFA}) & \EXPTIMEc	\cite{Kasai:Iwata:1992} & \UNDEC	& \UNDEC	& \UNDEC \\
    		\UCFG				& \PTIME	& $\leq$\UUCFL	(Th.~\ref{thm:CFG:UFA})& \EXPTIMEc \cite{Kasai:Iwata:1992} & \UNDEC    & \UNDEC	& \UNDEC \\
    		\CFG				& \PTIME	& $\leq$\UUCFL (Th.~\ref{thm:CFG:UFA})	& \EXPTIMEc	\cite{Kasai:Iwata:1992} & \UNDEC	& \UNDEC	& \UNDEC
    	\end{tabular}
	\end{center}
	
	\noindent
	``$\leq$\UUCFL'': the problem reduces in \PTIME to \UUCFL.
	
	\noindent
	``$=$\UUCFL'': the problem is \PTIME inter-reducible with \UUCFL.
	
	\caption{Inclusion problems for various classes of regular and context-free languages.}
	\label{fig:table}
\end{figure}

\paragraph{Contributions.}

We establish several connections between inclusion problems $\A \subseteq \B$
when $\B$ is \UFA or \UCFG with the \UUCFG problem.
Our contributions are as follows.

\begin{enumerate}[1.]

    \item We observe that in many cases the inclusion problem $L \subseteq M$
    reduces in polynomial time to the sub-case where $L$ is deterministic (\cref{sec:det:trick}).
    One application is lower bounds:
    Once we know that $\CFG \subseteq \NFA$ is \EXPTIME-hard \cite[Theorem 2.1]{Kasai:Iwata:1992},
    we can immediately deduce that the same lower bound carries over to $\DCFG \subseteq \NFA$ \cite[Theorem 3.1]{Kasai:Iwata:1992}.
    
    \item We observe that in many cases the inclusion problem $L \subseteq M$ with $L$ deterministic
    reduces in polynomial time to the universality problem (\cref{sec:incl2univ}).
    One application is upper bounds (combined with the previous point):
    For instance, from the fact that  $\UFA = \Sigma^*$ is in \PTIME
    we can deduce that the more general problem $\NFA \subseteq \UFA$ is also in \PTIME (\cref{thm:NFA:UFA}),
    which seems to be a new observation.
    
    \item We apply the last two points to show that the following inclusion problems $\A \subseteq \B$ reduce to \UUCFG:
    $\A \in \set{\DCFG, \UCFG, \CFG}$ and $\B = \UFA$ (\cref{thm:CFG:UFA});
    $\A \in \set{\DFA, \UFA, \NFA}$ and $\B = \UCFG$ (\cref{thm:NFA:UCFG}).
    Since \UUCFG is a special instance of the latter set of problems,
    they are \PTIME inter-reducible with \UUCFG.
    
    \item We show that \UUCFG is in \PSPACE (\cref{thm:UUCFL:PSPACE}),
    which improves the \EXPTIME upper bound that can be extracted from \cite{SalomaaSoittola:Book:PowerSeries:1978}.
    A \PSPACE upper bound for the same problem
    has also been shown by S.~Purgał in his master thesis \cite[Section 3.7]{Purgal:MSC:2018}.
    
    \item We complement the upper bound in the previous point
    by showing that computing the so-called \emph{coin-flip measure} of a \UCFG
    (a quantitative problem generalising universality; c.f.~\cref{sec:coin-flip})
    is \SQRTSUM-hard (\cref{thm:SQRTSUM-hardness}).
    The latter is a well-known problem in the theory of numerical computation,
    which is not known to be in \NP or \NP-hard
    \cite{AllenderBurgisserKjeldgaard-PedersenMiltersen:JC:2009,EtessamiYannakakis:JACM:2009}.
    
\end{enumerate}
The generic and simple polynomial time reductions of points 1.~and 2.~above
do not seem to be known in the literature.
Beyond the seminal work on $\UFA$ \cite{StearnsHunt:Unambiguous},
they also apply to very recent contributions on expressive models such as unambiguous register automata (c.f.~\cite{MottetQuaas:STACS:2019} for equality atoms)
and unambiguous finite and pushdown Parikh automata \cite{BostanCarayolKoechlinNicaud:ICALP:2020}.
In each of the cases above,
one can reduce from inclusion to universality.
A non-example where the reduction cannot be applied
is unambiguous Petri-nets with coverability semantics \cite{CzerwinskiFigueiraHofman:UVASS:2020}.


The \PSPACE upper bound on \UUCFG is obtained by reduction to a more general counting problem interesting on its own.
We introduce a natural class of number sequences ${f : \N \to \N}$
which we call \emph{convolution recursive} (conv-rec).
Examples include the Fibonacci $F(n+1) = F(n) + F(n-1)$
and Catalan numbers ${C(n+1) = (C \conv C)(n)}$,
where ``$\conv$'' denotes the convolution product.
We show that the function counting the number of words in $\lang G$ of a given length is conv-rec if $G$ is \UCFG.
(This result is analogous to the well-known
fact that \UCFG have algebraic generating functions
\cite{ChomskySchutzenberger:Algebraic:1963}.)
The \emph{zeroness problem} asks whether such a sequence is identically zero.
Our last contribution is a complexity upper-bound for the zeroness problem of conv-rec sequences.
\begin{enumerate}[6.]
    \item We show that the zeroness problem of conv-rec sequences is in \PSPACE (\cref{thm:zeroness}).
    We express this problem with a formula in the existential fragment for first-order logic over the reals,
    which can be decided in \PSPACE \cite{Canny:1988:STOC:1988}.
    %
    %
\end{enumerate}

\section{Convolution recursive sequences and their zeroness problem}

\paragraph{Convolution recursive sequences.}
Let $\N$, $\Z$, $\Q$, and $\R$ be the sets of natural, resp., integer, rational, and real numbers.
Let $\poly \Q {x_1, \dots, x_k}$ denote the ring of polynomials with coefficients from $\Q$
and variables $x_1, \dots, x_k$.
For two sequences indexed by natural numbers $f, g : \N \to \R$,
their \emph{sum} $f + g$ is the sequence $(f+g)(n) = f(n) + g(n)$,
and their \emph{convolution} is the sequence $(f \conv g)(n) = \sum_{k=0}^n f(k) \cdot g(n-k)$.
The convolution operation is associative $f \conv (g \conv h) = (f \conv g) \conv h$,
commutative $f \conv g = g \conv f$,
has as (left and right) identity the sequence $1,0,0,\dots$,
and distributes over the sum operation $(f + g) \conv h = f \conv g + g \conv h$.
Thus, sequences with the operations ``$+$'' and ``$\conv$'' form a semiring.
Let $\sigma : (\N \to \R) \to (\N \to \R)$ be the \emph{(forward) shift operator} on sequences,
which is defined as $(\sigma f)(n) = f(n+1)$.
The \emph{zeroness} problem for a sequence $f : \N \to \R$
amounts to decide whether $f(n) = 0$  for every $n \in \N$.

A \emph{convolution polynomial} $p(x_1, \dots, x_k)$
is a polynomial where the multiplication operation is interpreted as convolution
and a constant $k \in \Q$ is interpreted as the sequence $k,0,0,\dots$.
For example, $4 \conv (x_1 \conv x_2) + 3 \conv (x_2 \conv x_2)$
is a convolution polynomial of two variables $x_1, x_2$.
Let $\convpoly \Q {x_1, \dots, x_k}$ denote the ring of convolution polynomials with variables $x_1, \dots, x_k$.
A sequence $f : \N \to \R$ is \emph{convolution recursive} (\emph{conv-rec})
if there are $k$ \emph{auxiliary sequences} $f_1, \dots, f_k : \N \to \R$ with $f_1 = f$
and $k$ convolution polynomials $p_1, \dots, p_k \in \convpoly \Q {x_1, \dots, x_k}$
s.t., 
\begin{align}
	\label{eq:convrec}
	\left\{
		\begin{array}{lcl}
			\sigma f_1 &=& p_1(f_1, \dots, f_k), \\
			&\vdots& \\
			\sigma f_k &=& p_k(f_1, \dots, f_k).
		\end{array}
	\right.
\end{align}
The \emph{combined degree} of the representation above is the sum of the degrees of $p_1, \dots, p_k$.
For example, the Catalan numbers $C : \N \to \N$
are conv-rec (of combined degree two)
since $(\sigma C)(n) = (C \conv C)(n)$.

\begin{lemma}
	\label{lem:bounded-ratio}
	Let $f : \N \to \R$ be a conv-rec sequence of combined degree $\leq d$.
	Then ${\lim_{n \to \infty} \frac {f(n+1)} {f(n)} = O(d)}$.
\end{lemma}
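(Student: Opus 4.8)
The plan is to pass to generating functions and reduce the statement to a lower bound on a radius of convergence. Let $F_i(z) = \sum_{n \geq 0} f_i(n)\, z^n$ be the ordinary generating function of the $i$-th auxiliary sequence. Under this transform, convolution becomes the ordinary product of power series, a rational constant is unchanged, and the forward shift $\sigma f_i$ is sent to $(F_i(z) - f_i(0))/z$; hence system~\eqref{eq:convrec} becomes the polynomial system
\begin{align*}
  F_i(z) \;=\; f_i(0) + z \cdot p_i\bigl(F_1(z), \dots, F_k(z)\bigr), \qquad i = 1, \dots, k,
\end{align*}
with the $p_i$ now read as ordinary polynomials. Eliminating $F_2, \dots, F_k$ exhibits $F := F_1$ as an algebraic power series over $\Q(z)$, so it has positive radius of convergence $\rho$, and by Cauchy--Hadamard $\limsup_n |f(n)|^{1/n} = 1/\rho$; when the ratio $f(n+1)/f(n)$ converges, its limit is this same number. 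It therefore suffices to prove $\rho = \Omega(1/d)$.

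First I would pass to a nonnegative system by a majorant-series argument: replacing every coefficient of every $p_i$ and every initial value $f_i(0)$ by its absolute value yields a system whose nonnegative power-series solution $\bar F_i$ dominates $F_i$ coefficientwise (a straightforward induction on $n$), so $\rho \geq \rho(\bar F_1)$ and we may assume $f_1, \dots, f_k \geq 0$. Now sum the equations: since every monomial occurring in some $p_i$ has total degree at most the combined degree $d$, and each $\bar F_j$ is dominated by $G := \sum_i \bar F_i$, one obtains coefficientwise $G \leq a + \lambda\, z\, G^{\,d}$, where $a = \max\!\bigl(1, \sum_i |f_i(0)|\bigr)$ and $\lambda$ is the sum of the absolute values of all coefficients of all $p_i$. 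Hence $G$, and with it $\bar F_1$ and $F$, is dominated by the solution $H$ of the single ``$d$-ary branching'' equation $H = a + \lambda\, z\, H^{\,d}$.

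It then remains to locate the dominant singularity of $H$. Inverting gives $z = (H - a)/(\lambda H^{\,d})$, and maximising the right-hand side over $H > a$ produces a branch point at $H^\star = ad/(d-1)$, so that
\begin{align*}
  \rho(H) \;=\; \frac{1}{\lambda}\cdot\frac{(d-1)^{\,d-1}}{a^{\,d-1}\, d^{\,d}} \;=\; \Theta\!\left(\frac{1}{\lambda\, a^{\,d-1}\, d}\right),
\end{align*}
since $\bigl(d/(d-1)\bigr)^{d-1} \to e$. Together with $\rho \geq \rho(H)$ this gives $\limsup_n |f(n)|^{1/n} = O(d)$ (with the implicit constant depending only on the coefficients and initial values of the given representation), proving the lemma. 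I expect the main obstacle to be exactly this final calibration: checking that the comparison equation can always be taken $d$-ary (which is where the \emph{combined degree}, rather than, say, the number $k$ of auxiliary sequences, enters), and verifying that the radius of $H = a + \lambda\, z\, H^{\,d}$ scales like $1/d$ up to the representation's parameters. A softer alternative that sidesteps the explicit computation is an analytic fixed-point argument: the operator $\Phi(\vec G) = \bigl(f_i(0) + z\, p_i(\vec G)\bigr)_i$ is a contraction on a suitable ball of analytic functions on the disk $\{|z| \leq c/d\}$ for an appropriate constant $c$, which directly yields a solution analytic in a disk of radius $\Omega(1/d)$.
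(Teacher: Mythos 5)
Your route is genuinely different from the paper's. The paper's proof is a two-line extremality argument: it asserts, without proof, that the maximal relative growth is attained by the pure recurrence $\sigma f = f \conv \cdots \conv f$ ($d$ times) with $f(0)=1$, i.e.\ the Fuss--Catalan numbers, and then computes the ratio limit $d^d/(d-1)^{d-1} \le e\cdot d$ via Stirling. You instead dominate the whole system \eqref{eq:convrec} by a single equation $H = a + \lambda z H^d$ through a majorant-series induction on generating functions and read the growth rate off the branch point of $H$; note that for $a = \lambda = 1$ your branch-point value $1/\rho(H) = d^d/(d-1)^{d-1}$ is exactly the paper's constant, so your argument is in effect the rigorous version of the paper's unproven ``worst case'' claim. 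Two small repairs: the coefficientwise inequality $G^e \le G^d$ for $e<d$ needs a constant term at least $1$, so run the induction directly against $H$ (whose constant term is $a \ge 1$) rather than asserting $G \le a + \lambda z G^d$; and, like the paper, you tacitly assume the ratio limit exists, so the honest statement concerns $\limsup_n |f(n)|^{1/n}$.

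The genuine gap is in the constant. Your conclusion is $\limsup_n |f(n)|^{1/n} = O(\lambda\, a^{d-1} d)$ with $\lambda$ and $a$ taken from the representation, and you flag this; but the lemma's $O(d)$ has to be read with an absolute constant --- the paper's proof claims $\le e\cdot d$, and \cref{cor:unique} then uses the bound to obtain convergence of the generating functions on all of $[0,\frac 1 d)$. The factor $a^{d-1}$ in particular couples the initial values exponentially with $d$, so your bound does not deliver what the downstream lemma consumes, and the shortfall is not removable by more care: with unrestricted rational data the absolute-constant statement fails outright (take $\sigma f = c \conv f$, combined degree $1$, ratio $c$), which shows that the paper's extremality step silently normalises coefficients and initial values to the Fuss--Catalan case $\lambda = a = 1$. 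The same constants would reappear in your alternative contraction-mapping argument. To finish along your lines you must make such a normalisation an explicit hypothesis rather than hope the majorant argument absorbs it; in the intended application of \cref{sec:UUCFL} the required convergence on $[0,\frac 1 d)$ is in any case available for a separate reason, namely that every auxiliary counting sequence there is bounded by $2^n$ while $d \ge 2$.
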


\begin{proof}
	The maximal relative growth $\frac {f(n+1)} {f(n)}$ of a conv-rec sequence
	is achieved when $f$ satisfies a recurrence of the form
	%
		$\sigma f = {f \conv \cdots \conv f}$ ($d$ times)
	%
	for some degree $d \in \N$.
	If $f(0) = 1$, then the resulting sequence is known as the Fuss-Catalan numbers
	\cite{GrahamKnuthPatashnik:CM:1994}
	and it equals $f(n) = {d \cdot n + 1 \choose n} \frac 1 {d \cdot n+1}$.
	It can be checked by using Stirling's approximation $n! \sim \sqrt{2\pi n} \left(\frac n e\right)^n$
	that $\lim_{n \to \infty} \frac {f(n+1)} {f(n)} = \frac {d^d}{(d-1)^{d-1}} = d \cdot (1 + \frac 1 {d-1})^{d-1}$.
	The latter quantity is upper bounded by $d \cdot e$ for every $d \geq 1$.
\end{proof}

\paragraph{Generatingfunctionology.}

The \emph{formal power series} (a.k.a.~\emph{ordinary generating function})
associated with a number sequence $a : \N \to \R$
is the infinite polynomial 
%
	$g_a(x) = \sum_{n=0}^\infty a(n) \cdot x^n$. 
%
Let $\coefficient {x_n} g_a$ denote the coefficient $a(n)$ of $x^n$ in $g_a$.
%
%
%
Let $f, f_1, f_2 : \N \to \R$ be sequences.
It is well known that $g_k(x) = k$ for $k \in \R$,
$g_{f_1+f_2} = g_{f_1} + g_{f_2}$,
$g_{f_1 \conv f_2} = g_{f_1} \cdot g_{f_2}$,
and $g_f(x) = f(0) + x \cdot g_{\sigma f}(x)$.
Consequently, if $f_1$ is conv-recursive with auxiliary sequences $f_1, \dots, f_k$,
then their generating functions $g_{f_1}, \dots, g_{f_k}$
satisfy the following system of polynomial equations
\begin{align}
	\label{eq:gf}
	\left\{
		\begin{array}{rcl}
			g_{f_1}(x) &=& f_1(0) + x \cdot \hat p_1(g_{f_1}(x), \dots, g_{f_k}(x)), \\
			&\vdots& \\
			g_{f_k}(x) &=& f_k(0) + x \cdot \hat p_k(g_{f_1}(x), \dots, g_{f_k}(x)).
		\end{array}
	\right.
\end{align}
where $\hat p_i$ is the polynomial obtained from the convolution polynomial $p_i$
by replacing the convolution operation ``$\conv$'' on sequences
by the product operation ``$\cdot$'' on real numbers.
Thus, the generating function $g_f$ of a conv-rec sequence $f$ is \emph{algebraic}.

\begin{lemma}
	\label{lem:unique}
	The system of equations \eqref{eq:gf} has a unique formal power series solution.
\end{lemma}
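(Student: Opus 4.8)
The plan is to argue that the coefficients of any solution are forced, degree by degree, exploiting the fact that every right-hand side of \eqref{eq:gf} has the shape $f_i(0) + x \cdot \hat p_i(\cdots)$, so that the occurrences of the unknowns $g_{f_1}, \dots, g_{f_k}$ on the right are all ``guarded'' by a factor $x$. Existence is not really the point at issue: the power series $g_{f_1}, \dots, g_{f_k}$ built from the auxiliary sequences of the conv-rec representation of $f$ do satisfy \eqref{eq:gf}, by the homomorphism identities $g_k = k$, $g_{f_1 + f_2} = g_{f_1} + g_{f_2}$, $g_{f_1 \conv f_2} = g_{f_1} \cdot g_{f_2}$, and $g_f(x) = f(0) + x \cdot g_{\sigma f}(x)$ recalled just above; the content of the lemma is uniqueness.

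To prove uniqueness, I would let $(g_1, \dots, g_k)$ and $(g_1', \dots, g_k')$ be two formal power series solutions of \eqref{eq:gf} and show by induction on $n$ that $[x^n] g_i = [x^n] g_i'$ for all $i \in \{1, \dots, k\}$. For $n = 0$, taking constant terms in \eqref{eq:gf} annihilates the $x \cdot \hat p_i(\cdots)$ summand, so $[x^0] g_i = f_i(0) = [x^0] g_i'$. For the inductive step, assume the two solutions agree on all coefficients of degree $< n$. Extracting $[x^n]$ from the $i$-th equation of \eqref{eq:gf} gives $[x^n] g_i = [x^{n-1}] \, \hat p_i(g_1, \dots, g_k)$. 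Since $\hat p_i$ is an ordinary polynomial, $\hat p_i(g_1, \dots, g_k)$ is a finite $\Q$-linear combination of monomials $g_{j_1} \cdots g_{j_\ell}$, and $[x^{n-1}]$ of such a monomial is the finite sum $\sum_{a_1 + \cdots + a_\ell = n-1} \prod_t [x^{a_t}] g_{j_t}$, in which every exponent satisfies $a_t \leq n-1 < n$. By the inductive hypothesis all these coefficients coincide for the two solutions, hence so does $[x^n] g_i$, closing the induction.

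I do not expect a genuine obstacle here; the only point requiring a moment's care is that the polynomials $\hat p_i$ may carry a nonzero constant term, but a constant monomial contributes nothing to $[x^{n-1}]$ when $n \geq 2$ and contributes a fixed rational when $n = 1$, so it does not interfere with the coefficient extraction. Alternatively one can package the same reasoning abstractly: the operator $\Phi$ on $\R[[x]]^k$ sending $(g_1, \dots, g_k)$ to the tuple whose $i$-th component is $f_i(0) + x \cdot \hat p_i(g_1, \dots, g_k)$ strictly increases the $x$-adic valuation of differences, \ie it is a contraction on the complete ultrametric space $\R[[x]]^k$, so Banach's fixed-point theorem yields a unique fixed point; but the elementary induction above is shorter and entirely self-contained.
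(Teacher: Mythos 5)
Your proof is correct and follows essentially the same route as the paper: existence by construction from the generating-function identities, and uniqueness by induction on the coefficient index, using that each right-hand side of \eqref{eq:gf} has the guarded form $f_i(0) + x \cdot \hat p_i(\cdots)$ so that $\coefficient{x^n}{g_i}$ is determined by coefficients of degree at most $n-1$. The extra remarks (constant terms of $\hat p_i$, the $x$-adic contraction viewpoint) are fine but not needed.
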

\begin{proof}
	By construction, $g_f = (g_{f_1}, \dots, g_{f_k})$ is a formal power series solution of \eqref{eq:gf}.
	We now argue that there is no other solution.
	Assume that $g = (g_1, \dots, g_k)$ is a solution of \eqref{eq:gf}.
	We prove that, for every $n \in \N$,
	$\coefficient {x^n} g = (\coefficient {x^n} {g_1}, \dots, \coefficient {x^n} {g_1})$
	equals $\coefficient {x^n} {g_f} = (\coefficient {x^n} {g_{f_1}}, \dots, \coefficient {x^n} {g_{f_k}})$.
	The base case follows immediately from \eqref{eq:gf},
	since $\coefficient {x^0} {g_i} = f_i(0)$ by definition.
	For the inductive step $n > 0$, notice that
	\begin{inparaenum}[1)]
		\item from \eqref{eq:gf} we have
		$\coefficient {x^n} {g_i} = \coefficient {x^n} {(x\cdot\hat p_i(g))} = \coefficient {x^{n-1}} {\hat p_i(g)}$, and
		\item the latter quantity is a (polynomial) function
		of the coefficients $\coefficient {x^i} g$ for $0 \leq i \leq n-1$.
	\end{inparaenum}
	By inductive assumption,
	$\coefficient {x^i} g = \coefficient {x^i} {g_f}$ for every $0 \leq i \leq n-1$,
	and thus by the two observations above
	$\coefficient {x^n} g = \coefficient {x^n} {g_f}$.
\end{proof}

\begin{lemma}
	\label{cor:unique}
	Let $d$ be the combined degree of $f = \tuple{f_1, \dots, f_k}$.
	The system \eqref{eq:gf}
	has a unique solution
	$g_f(x^*) = \tuple {g_{f_1}(x^*), \dots, g_{f_k}(x^*)} \in \R^k$
	for every $0 \leq x^* < \frac 1 d$.
\end{lemma}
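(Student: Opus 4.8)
The plan is to combine the growth estimate of Lemma~\ref{lem:bounded-ratio} with a contraction argument, getting existence and uniqueness simultaneously. For \emph{existence}: permuting the equations of \eqref{eq:convrec} so that $f_i$ heads the list shows that each auxiliary sequence $f_i$ is itself conv-rec of combined degree at most $d$, so Lemma~\ref{lem:bounded-ratio} gives $\lim_n f_i(n+1)/f_i(n) = O(d)$ and hence $\limsup_n |f_i(n)|^{1/n} \le d$. Consequently each generating function $g_{f_i}(x) = \sum_n f_i(n)\, x^n$ — and its majorant $\sum_n |f_i(n)|\, x^n$ — converges absolutely for every real $x^*$ with $0 \le x^* < \tfrac 1 d$, so $g_f(x^*) = \tuple{g_{f_1}(x^*), \dots, g_{f_k}(x^*)}$ is a well-defined point of $\R^k$. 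Since the identities \eqref{eq:gf} hold coefficientwise (they come from \eqref{eq:convrec} via the semiring homomorphism $f \mapsto g_f$, using $g_{f_1 \conv f_2} = g_{f_1}\cdot g_{f_2}$ and $g_f(x) = f(0) + x\cdot g_{\sigma f}(x)$) and both sides of each equation converge at $x^*$, they hold as numerical identities there; thus $g_f(x^*)$ solves the polynomial system obtained from \eqref{eq:gf} by fixing $x = x^*$.

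For \emph{uniqueness}, introduce for each $i$ the dominating conv-rec sequence $h_i$ with $h_i(0) = |f_i(0)|$ and $\sigma h_i = \widehat{|p_i|}(h_1, \dots, h_k)$, where $\widehat{|p_i|}$ has the absolute values of the coefficients of $p_i$; applying the triangle inequality to the convolution sums gives $|f_i(n)| \le h_i(n)$ by induction on $n$. The $h$-system again has combined degree $d$, so by the first part $R_i := g_{h_i}(x^*)$ is finite for $x^* < \tfrac 1 d$, with $|g_{f_i}(x^*)| \le R_i$, i.e.\ $g_f(x^*)$ lies in the box $B := \prod_i [-R_i, R_i]$. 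Now consider $\Phi_{x^*}\colon \R^k \to \R^k$, $\Phi_{x^*}(y)_i = f_i(0) + x^*\, \hat p_i(y)$, whose fixed points are exactly the solutions of \eqref{eq:gf} at $x = x^*$. The generating-function identity \eqref{eq:gf} applied to the $h$-system shows $\Phi_{x^*}(B) \subseteq B$, and estimating the Jacobian $D\Phi_{x^*}(y) = x^*\, D\hat p(y)$ on $B$ by the corresponding series of derivatives of the $h_j$ (using $\sum_j x_j\,\partial_j q(x) \le \deg(q)\, q(x)$ for polynomials $q$ with non-negative coefficients, which is where the bound $d$ enters) shows that $\Phi_{x^*}$ is a contraction on $B$ for $x^* < \tfrac 1 d$. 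By Banach's fixed point theorem $\Phi_{x^*}$ has a unique fixed point in $B$, which must be the point $g_f(x^*) \in B$ exhibited above; this is the asserted solution.

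The step I expect to be delicate is the growth/radius bookkeeping: Lemma~\ref{lem:bounded-ratio} is asymptotic and stated with an $O(\cdot)$ (the extremal recurrence $\sigma f = f \conv \dots \conv f$ has limiting ratio $\tfrac{d^d}{(d-1)^{d-1}}$), so the constants have to be tracked carefully to secure both absolute convergence of the $g_{h_i}$ and a contraction factor strictly below $1$ throughout the \emph{whole} interval $[0,\tfrac 1 d)$ rather than only on a smaller sub-interval. A secondary point worth spelling out is the scope of ``unique'': the polynomial system at a fixed $x^*$ may have further real roots outside $B$ (for the Catalan recurrence $g = 1 + x g^2$ the spurious root $\tfrac{1 + \sqrt{1-4x^*}}{2x^*}$ blows up as $x^* \to 0$), so uniqueness is meant inside the natural box $B$ determined by the majorant series — equivalently, of the root reached as $\lim_m \Phi_{x^*}^m(0)$ — which is exactly the quantity needed for the zeroness algorithm and is consistent with the power-series uniqueness of Lemma~\ref{lem:unique}.
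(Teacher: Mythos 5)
Your route for the uniqueness half is genuinely different from the paper's. The paper's proof consists of exactly two steps: convergence of the $g_{f_i}$ at $x^*$ is read off from \cref{lem:bounded-ratio}, and uniqueness is obtained by simply invoking \cref{lem:unique}, i.e.\ uniqueness of the \emph{formal power series} solution of \eqref{eq:gf}. You instead re-prove uniqueness analytically via a majorant system and Banach's fixed point theorem on a box $B$. Your ``secondary point'' is in fact the crux: passing from formal-series uniqueness to uniqueness of a real solution at a fixed $x^*$ is precisely the step the paper leaves implicit, and your Catalan example shows it cannot be taken literally --- $y = 1 + x^* y^2$ has \emph{two} real roots $\frac{1 \pm \sqrt{1-4x^*}}{2x^*}$ for every $0 < x^* < \frac 1 4$, so uniqueness over all of $\R^k$, which is what the statement asserts and what the proof of \cref{thm:zeroness} uses when it quantifies over all $\bar y \in \R^k$, fails. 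Your box-restricted (least-fixed-point) uniqueness is the defensible version, but be aware that it is strictly weaker than the lemma as written, and that it only feeds into \cref{thm:zeroness} if the universal sentence there is likewise restricted to the box.

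Two further gaps remain in your argument, both of which you flag but neither of which is mere bookkeeping. First, convergence on the whole of $[0, \frac 1 d)$ does not follow from \cref{lem:bounded-ratio}: the extremal ratio is $\frac{d^d}{(d-1)^{d-1}} > d$, so the guaranteed radius of convergence is $\frac{(d-1)^{d-1}}{d^d} < \frac 1 d$; already for Catalan ($d=2$) the series diverges on $(\frac 1 4, \frac 1 2)$ and the system has no real solution there at all, so existence on the full interval cannot be rescued by tracking constants --- the interval itself must shrink (this equally affects the paper's own proof and the statement of the lemma). Second, the contraction step: bounding the Jacobian on $B$ in the natural weighted sup-norm $\norm z = \max_j \abs{z_j}/R_j$ and using your Euler-type inequality gives a Lipschitz factor of at most $\max_i \deg(p_i)\cdot\frac{R_i - \abs{f_i(0)}}{R_i}$, which need not be below $1$ (it is $\geq \deg(p_i)$ whenever $f_i(0)=0$ and $p_i$ is nonlinear), even though it happens to work for Catalan on $[0,\frac 1 4)$. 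Isolating the least fixed point therefore needs a genuine argument --- e.g.\ about the spectral radius of the Jacobian of the monotone majorant system at its least fixed point, as in the theory of monotone polynomial systems \cite{EtessamiYannakakis:JACM:2009} --- not just careful constants. As it stands, your proposal establishes a corrected, box-restricted variant of the lemma on a smaller interval rather than the statement as given; the paper's two-line proof would need the same repairs.
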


\begin{proof}
	Let $g_f = \tuple{g_{f_1}, \dots, g_{f_k}}$
	be the tuple of formal power series of the sequences $f_1, \dots, f_k$.
	By \cref{lem:bounded-ratio},
	$\lim_{n\to\infty} \frac {f_i(n+1)} {f_i(n)} = O(d)$.
	Thus, $g_f(x^*) = \tuple {g_{f_1}(x^*), \dots, g_{f_k}(x^*)} \in \R^k$
	converges for every $0 \leq x^* < \frac 1 d$.
	%
	By \cref{lem:unique},
	$g_f$ is the unique formal power series solution of \eqref{eq:gf}.
\end{proof}





\begin{theorem}
	\label{thm:zeroness}
	The zeroness problem for conv-rec sequences is in \PSPACE.
\end{theorem}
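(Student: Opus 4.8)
The plan is to express the negation ``$f \not\equiv 0$'' by a polynomial-size sentence in the existential first-order theory of the reals, which is decidable in \PSPACE \cite{Canny:1988:STOC:1988}; since \PSPACE is closed under complement, this yields the claim.

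First I would pass from $f$ to its generating function. Fix a conv-rec representation $f = f_1$ with auxiliary sequences $f_1, \dots, f_k$ and convolution polynomials $p_1, \dots, p_k$ of combined degree $d$, and let $\hat p_i$ be $p_i$ with ``$\conv$'' read as ordinary multiplication, so that \eqref{eq:gf} is the polynomial system $g_{f_i}(x) = f_i(0) + x \cdot \hat p_i(g_{f_1}(x), \dots, g_{f_k}(x))$. Plainly $f \equiv 0$ iff $g_{f_1} \equiv 0$. By \cref{lem:bounded-ratio,cor:unique} each $g_{f_i}$ has radius of convergence at least some explicit positive rational $r_0$ of polynomial bit-size (say, $r_0 = \tfrac 1 {3d}$), so the tuple $g = \tuple{g_{f_1}, \dots, g_{f_k}}$ is analytic near $0$ and solves \eqref{eq:gf} on $(-r_0, r_0)$. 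As a nonzero analytic function has only isolated zeros, $g_{f_1} \equiv 0$ iff $g_{f_1}$ vanishes on all of $(0, r_0)$; equivalently
\[
  f \not\equiv 0 \quad\Longleftrightarrow\quad \exists\, x^* \in (0, r_0):\ g_{f_1}(x^*) \neq 0 .
\]

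The main obstacle is that ``$\tuple{y_1, \dots, y_k} = g(x^*)$'' is not directly expressible: at a fixed $x^* \in (0, r_0)$ the system \eqref{eq:gf} typically has several real solutions (for instance $y = 1 + x^* y^2$ for the Catalan numbers has two when $x^*$ is small), so we must pin down the one equal to $g(x^*)$. I would do this by localising around the initial values $\tuple{f_1(0), \dots, f_k(0)}$: I claim there is an explicit positive rational $r \le r_0$ of polynomial bit-size such that, for every $x^* \in (0, r)$, the point $g(x^*)$ is the \emph{only} solution $\tuple{y_1, \dots, y_k}$ of \eqref{eq:gf} at $x^*$ lying in the closed unit ball $B$ around $\tuple{f_1(0), \dots, f_k(0)}$. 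Indeed, the Lipschitz constant $L$ of $\tuple{y_1, \dots, y_k} \mapsto \tuple{\hat p_1, \dots, \hat p_k}$ on $B$ and the maximum $M$ of its norm on $B$ have polynomial bit-size; for $x^* \le r := \min(r_0, \tfrac 1 {2M}, \tfrac 1 {2L})$ the fixed-point map of \eqref{eq:gf} at $x^*$ is a $\tfrac 1 2$-contraction of $B$ into itself, hence has at most one fixed point in $B$, and a routine continuity/bootstrap argument (using $g(0) = \tuple{f_1(0), \dots, f_k(0)}$) shows $g(x^*) \in B$ for such $x^*$.

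Finally I would form the sentence
\[
  \exists\, x^*, y_1, \dots, y_k \in \R:\quad 0 < x^* < r \ \wedge\ \textstyle\bigwedge_{i=1}^{k} \bigl(y_i = f_i(0) + x^* \cdot \hat p_i(y_1, \dots, y_k)\bigr) \ \wedge\ \sum_{i=1}^{k} (y_i - f_i(0))^2 \le 1 \ \wedge\ y_1 \neq 0
\]
and check it is equivalent to $f \not\equiv 0$. If it holds, its witness solves \eqref{eq:gf} at $x^*$ inside $B$, hence equals $g(x^*)$ by the localisation step, so $g_{f_1}(x^*) = y_1 \neq 0$ and $f \not\equiv 0$; conversely, if $f \not\equiv 0$ then $g_{f_1}$ is a nonzero analytic function on $(-r, r)$, so $g_{f_1}(x^*) \neq 0$ for some $x^* \in (0, r)$ and $g(x^*)$ witnesses the sentence. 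The sentence is purely existential, with $k+1$ variables, $O(k)$ polynomial (in)equalities of degree at most $d+1$, and constants (the $f_i(0)$, the coefficients of the $\hat p_i$, and $r$) of polynomial bit-size; by \cite{Canny:1988:STOC:1988} it is decidable in \PSPACE, and complementing puts the zeroness problem in \PSPACE. The step needing the most care is the localisation of the third paragraph: extracting the bounds $r, L, M$ with polynomial bit-size and proving that $g(x^*)$ is the unique solution of \eqref{eq:gf} in $B$.
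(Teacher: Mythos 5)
Your proof is correct, and it reaches the same destination as the paper---encode zeroness via the generating-function system \eqref{eq:gf} and decide a polynomial-size sentence over the reals in \PSPACE \cite{Canny:1988:STOC:1988}---but the decisive step is genuinely different. The paper invokes \cref{cor:unique} to assert that $g(x^*)$ is \emph{the} solution of \eqref{eq:gf} for every $0 \leq x^* < \frac 1 d$, and then writes a single universal sentence saying that every solution has first coordinate $0$. You refuse to rely on that global uniqueness (your Catalan example of two real branches is exactly the right worry; note also that for the zero sequence given by $\sigma f_1 = f_1 \conv f_1$ with $f_1(0)=0$ the system $y = x y^2$ has the spurious solution $y = 1/x^*$), and you restore uniqueness locally instead: for $x^*$ below an explicit rational $r$ of polynomial bit-size, the fixed-point map of \eqref{eq:gf} is a contraction of the unit ball around $\tuple{f_1(0),\dots,f_k(0)}$ into itself, so $g(x^*)$ is the only solution in that ball, and non-zeroness becomes an existential sentence with the extra ball constraint, complemented inside \PSPACE. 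What the paper's route buys is brevity---one sentence, no contraction estimates---at the price of leaning on the uniqueness lemma; what your route buys is robustness, since it needs only the convergence bound of \cref{lem:bounded-ratio} together with elementary Banach fixed-point and continuity-bootstrap arguments, and it works verbatim even when \eqref{eq:gf} has several real solutions at a given point. The details you flag as remaining (rational upper bounds of polynomial bit-size for $M$, $L$ and hence $r$, and the bootstrap showing $g(x^*)$ stays in the ball) are indeed routine.
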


\begin{proof}
	Let $f_1$ be a conv-rec sequence of combined degree $d$
	with auxiliary sequences $f_2, \dots, f_k$
	satisfying \eqref{eq:convrec}.
	Consider the associated generating functions
	$g = \tuple{g_{f_1}, \dots, g_{f_k}}$.
	Clearly, $f_1(n) = 0$ for every $n \in \N$
	if, and only if, $g_{f_1}(x) = 0$ for every $x$ sufficiently small.
	By \cref{cor:unique},
	$g(x^*)$ is the unique solution of \eqref{eq:gf}
	for every $0 \leq x^* < \frac 1 d$.
	It thus suffices to say that,
	for every $0 \leq x^* < \frac 1 d$,
	\emph{all} solutions $g(x^*)$ of the system \eqref{eq:gf}
	satisfy	$g_{f_1}(x^*) = 0$.
	This can be expressed by the following universal first-order sentence over the reals
	(where $\bar y = \tuple{y_1, \dots, y_k}$)
	\begin{align*}
		\forall \left(0 \leq x < \frac 1 d\right) \st \forall \bar y \st \bar y = f(0) + x \cdot \hat p(\bar y) \limplies y_1 = 0.
	\end{align*}
	%
	The sentence above can be decided in \PSPACE by appealing to the existential theory of the reals
 	\cite[Theorem 3.3]{Canny:1988:STOC:1988}.
\end{proof}



\section{Universality of unambiguous grammars}


Let $\Sigma$ be a finite alphabet.
We denote by $\Sigma^*$ the set of all finite words over $\Sigma$,
including the empty word $\varepsilon$.
A \emph{language} is a subset $L \subseteq \Sigma^*$.
The concatenation of two languages $L, M \subseteq \Sigma^*$
is \emph{unambiguous} if $w \in L\cdot M$ implies that $w$ factors uniquely as $w = u \cdot v$
with $u \in L$ and $M \in v$.
A \emph{context-free grammar} (\CFG) is a tuple $G = \tuple {\Sigma, N, S, {\from}}$
where $\Sigma$ is a finite alphabet of \emph{terminal symbols},
$N$ is a finite set of \emph{nonterminal symbols},
of which $S \in N$ is the \emph{starting nonterminal symbol},
and ${\from} \subseteq N \times (N \cup \Sigma)^*$
is a set of productions.
A \CFG is in \emph{short Greibach normal form}
if productions are of the form either $X \from \varepsilon$.
or $X \from aYZ$.
An \emph{$X$-derivation tree} is a tree satisfying the following conditions:
1) the root node $\varepsilon$ is labelled by the nonterminal $X \in X$,
2) every internal node is labelled by a nonterminal from $N$,
3) whenever a node $u$ has children $u\cdot 1, \dots, u\cdot k$
there exists a rule $Y \from w_1 \cdots w_k$ with $w_i \in N \cup \Sigma$
\st $Y$ is the label of $u$
and $w_i$ is the label of $u\cdot i$, and
4) leaves are labelled with terminal symbols from $\Sigma$.
The \emph{language recognised} by a nonterminal $X$
is the set $\lang X$ of words $w = a_1 \cdots a_n \in \Sigma$
\st there exists an $X$-derivation tree with leaves labelled by (left-to-right) $a_1, \dots, a_n$;
the language recognised by $G$ is the language recognised by the starting nonterminal $\lang G = \lang S$.
A \CFG $G$ is \emph{unambiguous} (\UCFG)
if for every accepted word $w \in \lang G$
there exists exactly one derivation tree witnessing its acceptance.
The universality problem (\UUCFG) asks,
given a \UCFG $G$,
whether $\lang G = \Sigma^*$.

\subsection{Reductions}

In this section present \PTIME reductions from inclusion problems for \NFA and \UCFG to \UUCFG.
This serves us as a motivation to study the complexity of \UUCFG in \cref{sec:UUCFL}.
We proceed in two steps.
In the first step, we present a general l.h.s.~determinisation procedure for inclusion problems (\cref{sec:det:trick}) which is widely applicable to essentially any machine-based model of computation.
In the second step, assuming a deterministic l.h.s., we show a reduction from inclusion to universality (\cref{sec:incl2univ}).
We apply these two reductions in \cref{sec:applications}.

\subsubsection{L.h.s.~determinisation for inclusion problems}
\label{sec:det:trick}

It is an empirical observation that in many inclusion problems of the form $L \subseteq M$
the major source of difficulty is with $M$ and not with $L$.
For example, for finite automata the inclusion problem is \PSPACE-complete when $M$ is presented by a \NFA
and in \NLOGSPACE when it is presented by a \DFA.
In either case, it is folklore that whether $L$ is presented as a \NFA or \DFA does not matter.
A more dramatic example is given when $L$ is regular and $M$ context-free,
since the inclusion above is undecidable when $M$ is presented by a \CFG
and in \PTIME when it is presented by a $\DCFG$.

In this section we give a formal explanation of this phenomenon
by providing a generic reduction of an inclusion problem as above
to one where the l.h.s.~$L$ is a deterministic language.
The reduction will be applicable under mild assumptions which are satisfied by most machine-based models of language acceptors
such as finite automata, B\"uchi automata, context-free grammars/pushdown automata, Petri-nets, register automata, timed automata, etc.
For the language class of the r.h.s.~$M$ it suffices to have closure under inverse homomorphic images,
and for the l.h.s.~$L$ it suffices that we can rename the input symbols
read by transitions in a suitable machine model%
\footnote{The reduction applies also to undecidable instances of the language inclusion problem such as $\CFG \subseteq \DCFG$,
however in this case it is of no use since $\DCFG \subseteq \DCFG$ is known to be undecidable \cite[Theorem 10.7, Point 2]{HopcroftUllman:1979}.}.
Moreover, we argue that such transformation preserves whether $M$ is recognised by a deterministic or an unambiguous machine.

Let $\Sigma$ be a finite alphabet%
\footnote{The construction below can easily be adapted to infinite alphabets of the form $\Sigma \times \mathbb A$,
where $\Sigma$ is finite and $\mathbb A$ is an infinite set of data values \cite{BojanczykKlinLasota:LMCS:2014}.}.
Assume that $L = \lang A \subseteq \Sigma^*$
is recognised\footnote{Languages of infinite words can be handled similarly.}
by a nondeterministic machine $A$ with transitions of the form $\delta = p \goesto {a, \op} q \in \Delta_A$,
where $\op$ is an optional operation that manipulates a local data structure (a stack, queue, a tape of a Turing machine, etc...).
The construction below does not depend on what $\op$ does.
We  assume w.l.o.g.~that $A$ is \emph{total},
i.e., for every control location $p$ and input symbol $a \in \Sigma$
there exists a transition of the form $p \goesto {a, \_} \_ \in \Delta_A$.
Consider a new alphabet $\Sigma' = \Delta_A$,
together with the projection homomorphism $h : \Sigma' \to \Sigma$
that maps a transition $\delta = p \goesto {a, \op} q \in \Delta_A$
to its label $h(\delta) = a \in \Sigma$.
We modify $A$ into a new machine $A'$ by replacing each transition $\delta$ above with
$p \goesto {\delta, \op} q \in \Delta_{A'}$.
Intuitively, $A'$ behaves like $A$ except that it needs to declare which transition $\delta$
it is actually taking in order to read $a = h(\delta)$.
By construction, $A'$ is deterministic (in fact, every transition has a unique label across the entire machine)
and $\lang A = h(\lang {A'})$ is the homomorphic image of $\lang {A'}$.

We need to adapt the machine $B$ recognising $M = \lang B$ in order to preserve inclusion.
For every transition $r \goesto {a, \op} s \in \Delta_B$
and for every $\delta = p \goesto {b, \op'} q \in \Delta_A$ with $b = a$,
we have in $B'$ a transition $r \goesto {\delta, \op} s \in \Delta_{B'}$.
Intuitively, $B'$ behaves like $B$ except that it reads additional information on the transition taken by $A'$.
This information is not actually used by $B'$ during its execution
but it is merely added in order to lift the alphabet from $\Sigma$ to $\Sigma'$.
We have $\lang {B'} = h^{-1}(\lang B)$ is the inverse homomorphic image of $\lang B$.
The following lemma states the correctness of the reduction.
\begin{lemma}
	%
	We have the following equivalence:
		$\lang A \subseteq \lang B
			 \textrm{ if, and only if, }
				\lang {A'} \subseteq \lang {B'}.$
\end{lemma}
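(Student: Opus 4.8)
The plan is to verify the biconditional by unwinding the homomorphisms $h$ and $h^{-1}$ and tracking how runs of $A'$ and $B'$ project to runs of $A$ and $B$. The key observation is that, by construction, $h$ restricted to $\lang{A'}$ is injective on words: since every transition of $A'$ carries a globally unique label, a word $w' \in \Sigma'^*$ accepted by $A'$ determines and is determined by the unique run of $A'$ on it, and $h(w')$ is exactly the label sequence of the corresponding run of $A$. Conversely, every accepting run of $A$ on a word $w$ lifts to a unique word $w' \in \lang{A'}$ with $h(w') = w$. Thus $\lang{A'}$ is in bijection with the set of accepting runs of $A$, and $h(\lang{A'}) = \lang A$.

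First I would prove the forward direction: assume $\lang A \subseteq \lang B$ and take any $w' \in \lang{A'}$. Then $h(w') \in h(\lang{A'}) = \lang A \subseteq \lang B$. I need $w' \in \lang{B'}$. Since $h(w') \in \lang B$, fix an accepting run of $B$ on $h(w')$; now $w' = \delta_1 \cdots \delta_n$ is a sequence of $A$-transitions with $h(\delta_i) = a_i$ the $i$-th letter of $h(w')$, and by the definition of $\Delta_{B'}$, for each $B$-transition $r \goesto{a_i,\op} s$ used in that run there is a matching $B'$-transition $r \goesto{\delta_i,\op} s$ (because $h(\delta_i) = a_i$). Stitching these together gives an accepting run of $B'$ on $w'$, so $w' \in \lang{B'}$.

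For the converse, assume $\lang{A'} \subseteq \lang{B'}$ and take $w \in \lang A$. Lift $w$ to the word $w' \in \lang{A'}$ obtained from a fixed accepting run of $A$ on $w$, so $h(w') = w$. By hypothesis $w' \in \lang{B'}$, and since $\lang{B'} = h^{-1}(\lang B)$ we get $h(w') = w \in \lang B$. Hence $\lang A \subseteq \lang B$. Both directions only use the two identities $\lang A = h(\lang{A'})$ and $\lang{B'} = h^{-1}(\lang B)$ together with totality of $A$ (so that the lift of $w$ exists), which were established in the construction above.

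The only mildly delicate point — and the one I would spell out carefully — is the claim $h(\lang{A'}) = \lang A$, i.e. that the determinisation by labelling transitions does not change the recognised language up to $h$; this is where totality of $A$ and the fact that $A'$ merely refines the alphabet (keeping the same control locations and the same data operations $\op$) are used. Everything else is a routine run-stitching argument, so I do not expect a real obstacle; the reduction is deliberately generic and the proof is short.
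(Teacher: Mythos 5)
Your proof is correct and follows essentially the same route as the paper: both directions rest on the two identities $\lang A = h(\lang {A'})$ and $\lang {B'} = h^{-1}(\lang B)$ established in the construction, the only difference being that you argue element-wise (lifting words and stitching runs) where the paper invokes the generic image/preimage inclusions $\lang{A'} \subseteq h^{-1}(h(\lang{A'}))$ and $h(h^{-1}(\lang B)) \subseteq \lang B$. One small caveat: your opening remark that $h$ is injective on $\lang{A'}$ is false in general when $A$ is nondeterministic (two distinct accepting runs of $A$ on the same word yield distinct words of $\lang{A'}$ with the same image under $h$) --- what is true is the bijection between $\lang{A'}$ and accepting runs of $A$ --- but neither direction of your argument actually uses injectivity, so the proof stands.
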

\begin{proof}
	By generic properties of images and inverse images we have the following two inclusions:
	\begin{align}
	\label{eq:prop:h}
		\lang {A'} \subseteq h^{-1}(h(\lang {A'}))
			\quad \textrm{ and } \quad
				h(h^{-1}(\lang B)) \subseteq \lang B.
	\end{align}
	For the ``only if'' direction,
	if $\lang A \subseteq \lang B$ holds,
	then $h^{-1}(\lang A) \subseteq h^{-1}(\lang B)$,
	which, by the definition of $A'$ and $B'$,
	is the same as $h^{-1}(h(\lang A)) \subseteq \lang {B'}$.
	By \eqref{eq:prop:h},
	$\lang {A'} \subseteq h^{-1}(h(A')) \subseteq \lang {B'}$, as required.
	For the ``if'' direction,
	if $\lang {A'} \subseteq \lang {B'}$ holds,
	then also $h(\lang {A'}) \subseteq h( \lang {B'})$ holds.
	Similarly as above,
	we have $\lang A = h(\lang {A'}) \subseteq h(\lang {B'}) = h(h^{-1}(\lang B)) \subseteq \lang B$,
	as required.
\end{proof}

The following lemma states that the reduction above preserves whether $B$ is deterministic or unambiguous.
We mean here the following generic semantic notion of unambiguity:
$B$ is \emph{unambiguous} if for every $w \in \Sigma^*$,
there exists at most one accepting run of $B$ over $w$.
(This notion specialises to the classical notion of unambiguity of finite automata, pushdown automata, Parikh automata, etc.)
\begin{lemma}
	If $B$ is deterministic, then so is $B'$.
	If $B$ is unambiguous, then so is $B'$.
\end{lemma}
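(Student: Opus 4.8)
The plan is to exploit the tight correspondence between runs of $B$ over a word $w \in \Sigma^*$ and runs of $B'$ over any word $w' \in (\Sigma')^*$ with $h(w') = w$. Recall from the construction that $B'$ has exactly the transitions $r \goesto{\delta,\op} s$ for each $r \goesto{a,\op} s \in \Delta_B$ and each $\delta \in \Delta_A$ with $h(\delta) = a$; in particular the underlying control structure and data-structure operations of $B'$ are those of $B$, only the input labels have been refined. First I would observe that, since $h$ is a length-preserving (letter-to-letter) homomorphism, a word $w' = \delta_1 \cdots \delta_n \in (\Sigma')^*$ determines its image $w = h(\delta_1) \cdots h(\delta_n)$ letter by letter. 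Then the key step is to set up the following bijection-like statement: for a fixed $w' = \delta_1 \cdots \delta_n$ with image $w = a_1 \cdots a_n$, the runs of $B'$ reading $w'$ are in one-to-one correspondence with the runs of $B$ reading $w$ that, at step $i$, use a transition whose label is $a_i$. More precisely, a run $\rho' = r_0 \goesto{\delta_1,\op_1} r_1 \goesto{\delta_2,\op_2} \cdots \goesto{\delta_n,\op_n} r_n$ of $B'$ is legal iff each $r_{i-1} \goesto{\delta_i,\op_i} r_i$ is a transition of $B'$, which by definition happens iff $r_{i-1} \goesto{h(\delta_i),\op_i} r_i = r_{i-1} \goesto{a_i,\op_i} r_i$ is a transition of $B$; hence $\rho'$ is legal in $B'$ iff the "forgetful" sequence $\rho = r_0 \goesto{a_1,\op_1} r_1 \cdots \goesto{a_n,\op_n} r_n$ is a legal run of $B$ over $w$ (the data-structure side conditions are identical since the $\op_i$ are unchanged, and acceptance is a condition on $r_n$ and the final data-structure contents only). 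So forgetting the refinement of the labels is a bijection from runs of $B'$ over $w'$ to runs of $B$ over $w$, preserving acceptance.

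With this bijection in hand, both claims follow. For the unambiguous case: suppose $B$ is unambiguous and let $w' \in (\Sigma')^*$ be arbitrary with $w = h(w')$. By unambiguity of $B$ there is at most one accepting run of $B$ over $w$; transporting along the bijection above, there is at most one accepting run of $B'$ over $w'$. Since $w'$ was arbitrary, $B'$ is unambiguous. For the deterministic case one can argue the same way (determinism being the stronger property that at every configuration and every input letter there is at most one outgoing transition): given a configuration of $B'$ and an input letter $\delta \in \Sigma' = \Delta_A$, the outgoing $B'$-transitions from a control state $r$ on letter $\delta$ are exactly those $r \goesto{\delta,\op} s$ coming from $B$-transitions $r \goesto{h(\delta),\op} s$; since $B$ is deterministic there is at most one such $B$-transition on the letter $h(\delta)$ from $r$, hence at most one $B'$-transition on $\delta$ from $r$. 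Thus $B'$ is deterministic. (Alternatively, observe that determinism of $B$ implies unambiguity plus uniqueness of the run up to each prefix, and the same transport argument applies prefix-by-prefix.)

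I do not anticipate a serious obstacle here; the statement is essentially bookkeeping. The one point that needs care — and which I would state explicitly rather than gloss over — is the precise meaning of "deterministic" and "unambiguous" for the abstract machine model $A$, $B$ used in this section: these are the generic semantic notions recalled just before the lemma (at most one accepting run per input word, respectively at most one successor configuration per configuration-and-letter), and the argument must be phrased purely at the level of runs and configurations, not at the level of any particular concrete model. Once that is fixed, the only thing being used about the construction is that $B \mapsto B'$ leaves the control states, the operations $\op$, and the acceptance condition untouched while replacing each label $a$ by the set of labels $\setof{\delta \in \Delta_A}{h(\delta) = a}$, and that $h$ is letter-to-letter; everything else is immediate from the run bijection.
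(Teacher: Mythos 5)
Your proposal is correct and follows essentially the same route as the paper: determinism is preserved because the outgoing $B'$-transitions from a state on a letter $\delta$ are exactly copies of the $B$-transitions on $h(\delta)$ (the paper phrases this as a contrapositive), and unambiguity is preserved by transporting accepting runs along the label-forgetting correspondence. The only difference is that you spell out the run-level bijection explicitly, which the paper leaves implicit in its ``analogous argument'' remark.
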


\begin{proof}
	A transition $p \goesto {\delta, \op} q \in \Delta_{B'}$ in $B'$
	is obtained taking several distinct copies of a transition
	${p \goesto {a, \op} q \in \Delta_B}$ in $B$
	w.r.t.~every possible transition $\delta \in \Delta_A$ over the same input symbol $h(\delta) = a$.
	By way of contradiction, assume that $B$ is deterministic
	and that $B'$ is not deterministic.
	There are two distinct transitions
	$p \goesto {\delta, \op_1} q_1, p \goesto {\delta, \op_2} q_2 \in \Delta_{B'}$
	in $B'$	from the same control location $p$ and input $\delta \in \Sigma'$.
	If $\delta$ is labelled by $h(\delta) = a \in \Sigma$,
	then by construction there are two distinct transitions
	$p \goesto {a, \op_1} q_1, p \goesto {a, \op_2} q_2 \in \Delta_B$
	in $B$ over the same input symbol $a$.
	This contradicts the fact that $B$ was assumed to be deterministic,
	and thus $B'$ must be deterministic as well.
	An analogous argument shows that also unambiguity is preserved.
\end{proof}

\subsubsection{From inclusion to universality}
\label{sec:incl2univ}

Let $\mathcal L$ and $\mathcal M$ be two classes of languages
and let $L \in \mathcal L$ and $M \in \mathcal M$.
A naive approach to decide the inclusion problem
(and the most common)
is to use the following equivalence:
\begin{align}
	\label{eq:incl2ne}
	L \subseteq M
    \quad \textrm{ if, and only if, } \quad
      L \cap (\Sigma^* \setminus M) = \emptyset.
\end{align}
However, this requires complementation of $M$,
which is either expensive (exponential complexity for \NFA)
or just impossible (context-free languages are not closed under complemenetation,
even for the unambiguous subclass \cite{Hibbard:Ullian:UCFL:1966}).
However, we observe the following related reduction
which works much better in our setting:
\begin{align}
	\label{eq:incl2univ}
	L \subseteq M
		\quad \textrm{ if, and only if, } \quad
			(M \cap L) \;\cup\; (\Sigma^* \setminus L) = \Sigma^*.
\end{align}
On the face of it, this looks more complicated than \eqref{eq:incl2ne}
because we now have to perform a complementation (of $L$),
an intersection, a union,
and finally we reduce to the universality problem instead of the nonemptiness,
which is still difficult in general.
However, in our setting there are gains.
First of all, thanks to \cref{sec:det:trick}
we can assume that $L$ is a deterministic language,
and thus complementation is usually available (and cheap).
Second, while universality is still a difficult problem,
it can be easier than inclusion,
e.g., \DCFG inclusion is undecidable while \DCFG universality is decidable (even in \PTIME).

In order to apply \eqref{eq:incl2univ}
we require that $\mathcal L$ is a deterministic class efficiently closed under complement
(i.e., a representation for the complement is constructible in \PTIME)
and that the class $\mathcal M$ is closed under disjoint unions and intersections with languages from $\mathcal L$.
Most deterministic languages classes, such as those recognised by deterministic finte automata,
deterministic context-free grammars,
deterministic Parikh automata,
deterministic register automata, etc.,
satisfy the first requirement%
\footnote{
	A notable exception is deterministic Petri-net languages under coverability semantics,
	since the complement of such languages intuitively requires checking whether some counter is negative,
	which is impossible without zero tests.
	In fact, if both a language and its complement are deterministic Petri-net recognisable under coverability semantics,
	then they are both regular \cite{CzerwinskiLasotaMeyerMuskallaKumarSaivasan:CONCUR:2018}.
}.
The second requirement is satisfied for classes of languages for which the underlying machine models admit a product construction%
\footnote{
	As an example not satisfying this requirement,
	one can take $\mathcal L = \mathcal M$
	to be the class of \DCFL,
	since they are not closed under intersection.
	In fact, while we show in this paper that \UUCFG is decidable,
	the equivalence problem for \UCFG is open.
}.

\subsubsection{Applications}
\label{sec:applications}

In this section we apply the reductions of \cref{sec:det:trick} and \cref{sec:incl2univ}
in order to reduce certain inclusion problems to their respective universality variant.

\begin{theorem}
    \label{thm:NFA:UFA}
    ``$\NFA \subseteq \UFA$'' is in \PTIME.
\end{theorem}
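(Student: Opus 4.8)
The plan is to chain the two generic reductions of \cref{sec:det:trick} and \cref{sec:incl2univ} with the polynomial-time algorithm of Stearns and Hunt \cite{StearnsHunt:Unambiguous}. Fix an \NFA $A$ and a \UFA $B$; we want to decide $\lang A \subseteq \lang B$. First I would apply the left-hand-side determinisation of \cref{sec:det:trick}, instantiated to ordinary finite automata: in polynomial time this yields a deterministic (possibly partial) automaton $A'$ and an unambiguous automaton $B'$ over the new alphabet $\Sigma' = \Delta_A$, with $\lang A \subseteq \lang B$ if, and only if, $\lang {A'} \subseteq \lang {B'}$. Here $A'$ keeps the control locations of $A$ and its transitions up to relabelling, and $B'$ keeps the locations of $B$ with at most $\card{\Delta_A}$ times as many transitions, so both stay polynomial; determinism of $A'$ and unambiguity of $B'$ are exactly the content of the two lemmas of \cref{sec:det:trick}.

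Next I would apply \eqref{eq:incl2univ}. Since $\lang {A'}$ is now a deterministic regular language, we may complete $A'$ with a sink location and swap final and non-final locations to obtain a \DFA---in particular a \UFA---for the complement $(\Sigma')^* \setminus \lang {A'}$. Equation \eqref{eq:incl2univ} then says that $\lang {A'} \subseteq \lang {B'}$ holds if, and only if, $(\lang {B'} \cap \lang {A'}) \cup ((\Sigma')^* \setminus \lang {A'}) = (\Sigma')^*$. The intersection $\lang {B'} \cap \lang {A'}$ is recognised by the product automaton of $B'$ and $A'$, which is again unambiguous because $A'$ is deterministic: an accepting run of the product is exactly an accepting run of $B'$ paired with the unique run of $A'$ on the same word. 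The two languages being combined are disjoint, one being a subset of $\lang {A'}$ and the other its complement, so the disjoint union of the two unambiguous automata above is a \UFA $C$ of polynomial size recognising $(\lang {B'} \cap \lang {A'}) \cup ((\Sigma')^* \setminus \lang {A'})$.

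Finally it remains to test whether $\lang C = (\Sigma')^*$, i.e.~to solve a \UFA universality instance; this is the special case of ``$\UFA \subseteq \UFA$'' obtained by placing a trivial one-location \UFA for $(\Sigma')^*$ on the left, hence it is decidable in \PTIME by \cite{StearnsHunt:Unambiguous}. Composing three polynomial-time steps gives the bound. The only point needing real care is that unambiguity survives the product with $A'$ and the final disjoint union in the middle step; both rely on $A'$ being deterministic and on the two combined languages being disjoint, and I do not expect any genuine obstacle beyond checking these two facts carefully.
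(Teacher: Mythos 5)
Your proposal is correct and follows essentially the same route as the paper: l.h.s.\ determinisation via \cref{sec:det:trick}, then the reduction \eqref{eq:incl2univ} to \UFA universality (complement the \DFA, product for the intersection, disjoint union), and finally the \PTIME algorithm of Stearns and Hunt. The extra details you supply (completing $A'$, unambiguity of the product, disjointness of the union) are exactly the checks the paper's proof relies on.
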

\noindent
While equivalence and inclusion of \UFA is well-known to be in \PTIME \cite[Corollary 4.7]{StearnsHunt:Unambiguous},
the same complexity for the more general problem ``$\NFA \subseteq \UFA$'' does not seem to have been observed before.
\begin{proof}
    By \cref{sec:det:trick},
    the problem reduces to ``$\DFA \subseteq \UFA$''.
    By \eqref{eq:incl2univ},
    $L \subseteq M$ is equivalent to $N := M \cap L \cup (\Sigma^* \setminus L) = \Sigma^*$.
    Notice that $N$ is effectively \UFA,
    since the \DFA language $L$ can be complemented in \PTIME,
    the intersection $M \cap L$ is also \UFA and computable in quadratic time,
    and the disjoint union of a \UFA and a \DFA is also a \UFA computable in linear time.
    Since the universality problem for unambiguous automata can be solved in \PTIME,
    also ``$\DFA \subseteq \UFA$'', and thus ``$\NFA \subseteq \UFA$'', is in \PTIME as well.
\end{proof}

\begin{theorem}
    \label{thm:NFA:UCFG}
    ``$\NFA \subseteq \UCFG$'' is \PTIME inter-reducible with \UUCFG.
\end{theorem}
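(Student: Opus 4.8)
The plan is to prove the two directions of the $\PTIME$ inter-reducibility separately. The easy direction is that $\UUCFG$ reduces to ``$\NFA \subseteq \UCFG$'': given a $\UCFG$ $G$ over $\Sigma$, we have $\lang G = \Sigma^*$ if and only if $\lang{A_{\Sigma^*}} \subseteq \lang G$ where $A_{\Sigma^*}$ is the one-state $\DFA$ (hence $\NFA$) accepting $\Sigma^*$; this is clearly a $\PTIME$ reduction. For the other direction I would invoke the two generic reductions already established in the paper. By \cref{sec:det:trick}, the problem ``$\NFA \subseteq \UCFG$'' reduces in $\PTIME$ to ``$\DFA \subseteq \UCFG$'': the r.h.s.\ class $\UCFG$ is closed under inverse homomorphic images (a standard construction on grammars that preserves unambiguity), and the l.h.s.\ $\NFA$ can be relabelled to a $\DFA$ as in the construction there. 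By \eqref{eq:incl2univ}, ``$L \subseteq M$'' with $L$ a $\DFA$ language and $M$ a $\UCFG$ language is equivalent to $(M \cap L) \cup (\Sigma^* \setminus L) = \Sigma^*$.

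The core of the argument is then to verify that the language $N := (M \cap L) \cup (\Sigma^* \setminus L)$ is effectively a $\UCFG$ language, constructible in $\PTIME$ from the given $\DFA$ for $L$ and $\UCFG$ for $M$. For this I would check the following closure properties with effectiveness: (1) $\DFA$ languages are closed under complement in $\PTIME$ (swap accepting/non-accepting states after totalising), so $\Sigma^* \setminus L$ is a $\DFA$, hence a $\UCFG$; (2) the intersection of a $\UCFG$ language with a $\DFA$ language is again a $\UCFG$ language, constructible in polynomial time via the standard triple/product construction on a grammar in (short) Greibach normal form synchronised with the $\DFA$ --- since the $\DFA$ is deterministic, each derivation tree of $M \cap L$ corresponds bijectively to a derivation tree of $M$, so unambiguity is preserved; (3) $N$ is a \emph{disjoint} union: every word is either in $L$ or in $\Sigma^* \setminus L$, never both, so $M \cap L$ and $\Sigma^* \setminus L$ have disjoint domains, and the disjoint union of two $\UCFG$ languages is a $\UCFG$ language via a fresh start symbol with two productions $S \from S_1 \mid S_2$, which stays unambiguous precisely because the two sub-languages are disjoint. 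Combining these, $N$ is a $\UCFG$ language of polynomial size, and ``$\DFA \subseteq \UCFG$'' thus reduces in $\PTIME$ to $\UUCFG$. Chaining with the $\NFA$-to-$\DFA$ reduction gives the full direction.

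The main obstacle I anticipate is the bookkeeping in step (2): intersecting a context-free grammar with a $\DFA$ while keeping the grammar in short Greibach normal form and, crucially, arguing carefully that the product construction is \emph{size-polynomial} and \emph{unambiguity-preserving}. The determinism of the $\DFA$ is exactly what makes the parse trees of $M \cap L$ refine those of $M$ in a one-to-one fashion; if one used an $\NFA$ here instead, unambiguity could be destroyed, which is the whole point of first determinising the l.h.s.\ via \cref{sec:det:trick}. The remaining steps --- complementation of a $\DFA$, disjoint union, and the trivial reverse reduction --- are routine and folklore. Finally, since $\UUCFG$ is itself a special instance of ``$\NFA \subseteq \UCFG$'' (as noted in the easy direction, taking the l.h.s.\ to be the $\Sigma^*$-$\DFA$), the two problems are $\PTIME$ inter-reducible, completing the proof.
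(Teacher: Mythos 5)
Your proposal is correct and follows essentially the same route as the paper: reduce ``$\NFA \subseteq \UCFG$'' to ``$\DFA \subseteq \UCFG$'' via the l.h.s.~determinisation of \cref{sec:det:trick}, then apply \eqref{eq:incl2univ} using \PTIME complementation of \DFA, \PTIME intersection of a \UCFG with a \DFA, and disjointness of the union, with the converse direction being the observation that \UUCFG is the special case where the l.h.s.\ is the trivial $\Sigma^*$-automaton. The extra detail you supply on the unambiguity-preserving product construction is a fleshed-out version of what the paper states as a closure property.
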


\begin{proof}
    By \cref{sec:det:trick},
    the problem reduces to ``$\DFA \subseteq \UCFG$''.
    Thanks to \cref{sec:incl2univ},
    the latter problem reduces to \UUCFG
    since 1) \DFA languages are efficiently closed under complement (in \PTIME),
    2) \UCFG languages are efficiently closed under intersection with \DFA languages (in \PTIME), and
    3) the disjoint union of a \UCFG language and a \DFA language is a \UCFG language.
    %
    %
    %
    %
    %
    Thus, ``$\NFA \subseteq \UCFG$'' reduces to \UUCFG,
    and since \UUCFG is a special case of the former problem,
    ``$\NFA \subseteq \UCFG$'' is \PTIME inter-reducible with \UUCFG.
\end{proof}

\begin{theorem}
    \label{thm:CFG:UFA}
    ``$\CFG \subseteq \UFA$'' reduces to \UUCFG.
\end{theorem}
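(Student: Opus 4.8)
The plan is to chain the two generic reductions just developed. First, I would apply the l.h.s.~determinisation of \cref{sec:det:trick}: regarding the left-hand \CFG as a pushdown automaton $A$ and relabelling each of its transitions by a fresh symbol yields a deterministic pushdown automaton $A'$, so $L := \lang{A'}$ is a deterministic context-free language, while the right-hand \UFA $B$ becomes a \UFA $B'$ recognising the inverse homomorphic image $\lang{B'} = h^{-1}(\lang B)$ (regularity and, by the second lemma of \cref{sec:det:trick}, unambiguity are preserved). The two lemmas of \cref{sec:det:trick} give $\lang A \subseteq \lang B$ iff $L \subseteq M$ with $M := \lang{B'}$, and the transformation is polynomial. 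Thus it suffices to reduce ``$\DCFG \subseteq \UFA$'' to \UUCFG.

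For the second step, with $L$ deterministic context-free and $M$ unambiguous regular, I would invoke the inclusion-to-universality equivalence \eqref{eq:incl2univ}: $L \subseteq M$ iff $N := (M \cap L) \cup (\Sigma^* \setminus L) = \Sigma^*$, and then argue that $N$ is recognised by a \UCFG constructible in polynomial time, whence the claim follows since testing $N = \Sigma^*$ is an instance of \UUCFG. The three ingredients are: (i) $\Sigma^* \setminus L$ is again deterministic context-free, hence in particular \UCFG-recognisable, by the classical polynomial complementation of deterministic pushdown automata (tracking, between consecutive input letters, whether an accepting configuration has been visited, with the usual care for $\varepsilon$-moves); (ii) $M \cap L$ is \UCFG-recognisable: the product of the finite automaton $B'$ with a deterministic pushdown automaton for $L$ is a pushdown automaton whose runs on a word $w$ are exactly the pairs consisting of a run of $B'$ on $w$ and the (at most one) run of the deterministic pushdown automaton on $w$, so, as $B'$ is unambiguous, this product has at most one accepting run on each word, i.e.~it is an unambiguous pushdown automaton, which is turned into a \UCFG by the standard (polynomial) triple construction, which preserves unambiguity; (iii) the union defining $N$ is disjoint, because $M \cap L \subseteq L$ whereas $\Sigma^* \setminus L$ is its complement, so a \UCFG for $N$ is obtained from the \UCFGs for $M \cap L$ and $\Sigma^* \setminus L$ by adding a fresh start symbol with two productions.

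The step I expect to be the crux is (ii): one must verify both that the finite-automaton $\times$ pushdown-automaton product preserves unambiguity when the finite side is unambiguous and the pushdown side is deterministic, and that the pushdown-to-grammar translation preserves unambiguity while remaining polynomial; the other ingredients are textbook closure properties of deterministic context-free languages, and the rest is just composing the polynomial reductions of \cref{sec:det:trick} and \cref{sec:incl2univ}.
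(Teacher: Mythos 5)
Your proposal is correct and follows essentially the same route as the paper: determinise the left-hand side via \cref{sec:det:trick}, then apply \eqref{eq:incl2univ}, using closure of \DCFG under complement, a product construction for the intersection with the \UFA, and a disjoint union, so that the whole instance becomes a \UCFG universality query, i.e.\ \UUCFG. The only (harmless) divergence is that the paper asserts the intersection of a \UFA and a \DCFG language is efficiently \DCFG (and hence views the instance as a disjoint union of two \DCFG languages), whereas you only claim it is \UCFG via an unambiguity-preserving product---which is in fact the safer statement, since the product of a \UFA with a \DPDA need not be deterministic, only unambiguous.
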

\begin{proof}    
    By \cref{sec:det:trick}, ``$\CFG \subseteq \UFA$'' reduces to ``$\DCFG \subseteq \UFA$'',
    which in turn reduces to $\UUCFL$ thanks to \cref{sec:incl2univ}
    because 1) $\DCFG$ languages are efficiently closed under complement,
    2) the intersection of a \UFA and a \DCFG language is efficiently \DCFG, and
    3) the disjoint union of two \DCFG languages is efficiently \UCFG. 
    %
    %
    (The latter problem reduces to universality of two disjoint \DCFG languages,
    which in principle may be easier than \UUCFL.)
\end{proof}

\subsection{\UUCFG in \PSPACE}
\label{sec:UUCFL}

In this section we show that \UUCFG is in \PSPACE
by reducing to the zeroness problem for conv-rec sequences.
This complexity upper bound appears also in \cite{Purgal:MSC:2018},
albeit with a more direct argument reducing to systems of monotone polynomial equations.


Let $\Sigma = \set{a, b}$ be a finite alphabet
and let $L \subseteq \Sigma^*$ be a language of finite words over $\Sigma$.
The \emph{counting function} of $L$ is the sequence $f_L : \N \to \N$
s.t.~for every $n \in \N$,
$f_L(n) = \card {L \cap \Sigma^n}$ counts the number of words of length $n$ in $L$.
Given a unambiguous context-free grammar $G = \tuple{\Sigma, N, S, \from}$ in short Greibach normal form,
let $f_X := f_{\lang X} : \N \to \N$ be the counting function
of the language $\lang X$ recognised by the nonterminal $X \in N$.
It is well-known that the $f_X$'s satisfy the following system of equations with convolution:
\begin{align}
	\label{eq:count}
	f_X(n+1) = 
	    \sum_{X \from a Y Z} (f_Y \conv f_Z)(n).
\end{align}
The initial condition is $f_X(0) = 1$ if $X \from \varepsilon$ and $f_X(0) = 0$ otherwise.
In other words, $f_S$,
which is the counting function of the language $\lang G$ recognised by $G$,
is conv-rec.
Unambiguity is used crucially to show that any word $w$ in $\lang {Y \cdot Z}$
factorises uniquely as $w = u\cdot v$ with $u \in \lang Y$ and $v \in \lang Z$,
which allows us to obtain $f_{\lang {Y \cdot Z}} = f_{\lang Y} \conv f_{\lang Z}$.

Clearly, $G$ is universal if, and only if,
$f_S$ is identically equal to the sequence $g(n) = 2^n$.
The latter sequence is conv-rec since it satisfies
$g(n+1) = (g \conv g)(n)$,
with the initial condition $g(0) = 1$.
Thus $G$ is universal if, and only if,
$f(n) = g(n) - f_S(n)$ is identically zero.
%
Since conv-rec sequences are closed under subtraction, 
$f(n)$ is also conv-rec.
By \cref{thm:zeroness}, we can decide zeroness of $f$ in \PSPACE,
and thus the same upper bound holds for \UUCFG.

\begin{theorem}\label{thm:UUCFL:PSPACE}
	The universality problem for unambiguous context-free grammars \UUCFG is in \PSPACE.
\end{theorem}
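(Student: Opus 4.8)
The plan is to reduce \UUCFG to the zeroness problem for conv-rec sequences, which is in \PSPACE by \cref{thm:zeroness}, and to do so in polynomial time so that the complexity bound is preserved. First I would put the input \UCFG $G = \tuple{\Sigma, N, S, \from}$ into short Greibach normal form; this transformation is standard and can be carried out in polynomial time while preserving unambiguity, so \wlg productions have the shape $X \from \varepsilon$ or $X \from aYZ$. For each nonterminal $X$, let $f_X : \N \to \N$ be the counting function $f_X(n) = \card{\lang X \cap \Sigma^n}$. The key structural fact is that these functions satisfy the system \eqref{eq:count}, namely $f_X(n+1) = \sum_{X \from aYZ} (f_Y \conv f_Z)(n)$ with $f_X(0) = 1$ iff $X \from \varepsilon$ and $0$ otherwise; this exhibits $f_S$ as a conv-rec sequence (with the $f_X$'s as auxiliary sequences and the right-hand sides as convolution polynomials of combined degree at most $\card N$).

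The step I expect to require the most care is justifying \eqref{eq:count} itself, i.e.\ that the convolution product $f_{\lang{YZ}} = f_{\lang Y} \conv f_{\lang Z}$ is correct: this is precisely where unambiguity is used. Because $G$ is unambiguous, every word $w \in \lang{YZ}$ (more precisely every $w$ derivable from $X$ via a production $X \from aYZ$) factors \emph{uniquely} as $w = au'v'$ with $u' \in \lang Y$ and $v' \in \lang Z$ — otherwise two distinct factorisations would yield two distinct $X$-derivation trees for $w$, contradicting unambiguity. Hence the number of length-$(n+1)$ words derived through a given production $X \from aYZ$ is exactly $\sum_{k=0}^n f_Y(k) f_Z(n-k) = (f_Y \conv f_Z)(n)$, and summing over the (finitely many) productions with left-hand side $X$, while again invoking unambiguity to ensure distinct productions contribute to disjoint sets of parse trees, gives \eqref{eq:count}. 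I would state this reasoning carefully since it is the only place where the hypothesis on $G$ enters.

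Having established that $f_S$ is conv-rec, I would observe that $G$ is universal, i.e.\ $\lang G = \Sigma^*$, if and only if $f_S(n) = \card{\Sigma}^n$ for all $n$. Writing $c := \card\Sigma$, the sequence $g(n) = c^n$ is conv-rec: it satisfies $g(n+1) = c \conv (g \conv g)(n)$ with $g(0) = 1$ (or, when $\card\Sigma = 2$, simply $g(n+1) = (g\conv g)(n)$ as noted in the excerpt). Since conv-rec sequences are closed under subtraction — one simply takes the union of the two systems of auxiliary sequences and forms $f := g - f_S$ as a new top-level sequence whose defining equation is the difference of the two right-hand sides — the sequence $f(n) = g(n) - f_S(n)$ is conv-rec, and the whole construction is polynomial in the size of $G$. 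Then $G$ is universal iff $f$ is identically zero, and by \cref{thm:zeroness} this can be decided in \PSPACE. This completes the reduction and hence the proof that \UUCFG is in \PSPACE.
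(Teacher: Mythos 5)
Your reduction is essentially the paper's own proof: counting functions of the nonterminals of a short Greibach normal form grammar, unambiguity used exactly to justify $f_{\lang{Y\cdot Z}} = f_{\lang Y} \conv f_{\lang Z}$, closure of conv-rec sequences under subtraction, and an appeal to \cref{thm:zeroness}. The one slip is the recurrence you give for $g(n)=c^n$: $g(n+1) = (c \conv g \conv g)(n)$ evaluates to $c\,(n+1)\,c^n \neq c^{n+1}$, and likewise $(g \conv g)(n) = (n+1)2^n$ when $c=2$, so the variant you quote has the same defect (with $g(0)=1$ that recurrence defines the Catalan numbers, not $2^n$); the correct and simpler witness that $c^n$ is conv-rec is $\sigma g = c \conv g$, i.e.\ $g(n+1) = c\cdot g(n)$ with $g(0)=1$, which is exactly what a short Greibach grammar for $\Sigma^*$ yields. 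With that trivial fix your argument is complete and coincides with the paper's.
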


\section{\SQRTSUM-hardness of coin-flip measure}
\label{sec:coin-flip}

In this section we show that a quantitative generalisation of \UUCFG is hard
for a well-known problem in numerical computing.
Let $\Sigma_n = \set{a_1, \dots, a_n}$ be a finite alphabet of $n$ distinct letters.
Consider the following random process to generate a finite word in $\Sigma^*$.
%
At step $k$ we select one option $a_k \in \Sigma_\varepsilon = \Sigma_n \cup \set \varepsilon$
uniformly at random.
If $a_k = \varepsilon$,
then we terminate and we produce in output $a_0 \cdots a_{k-1}$.
Otherwise, we continue to the next step $k+1$.
It is easy to see that the probability to generate a word depends only on its length
and equals $\mu_\text{coin}(w) = \left(\frac 1 {\card \Sigma+1}\right)^{\card w +1}$.
The \emph{coin-flip measure} of a language of finite words $L\subseteq \Sigma^*$ is
%
	$\mu_\text{coin}(L) = \sum_{w \in L} \mu_\text{coin}(w).$
%
Clearly, $0 \leq \mu_\text{coin}(L) \leq 1$,
$\mu_\text{coin}(L) = 0$ iff $L = \emptyset$,
and $\mu_\text{coin}(L) = 1$ iff $L = \Sigma^*$.

Since $\mu_\text{coin}(w)$ depends just on $\abs w$,
we can write $\mu_\text{coin}(L) = \sum_{k=0}^\infty f_L(k) \cdot \left(\frac 1 {n+1}\right)^{k+1}$,
where $f_L(k) = \abs {L \cap \Sigma^k}$ is the counting function of $L$.
In other words, one possible way of computing the coin-flip measure
it by evaluating the generating function $g_{f_L}(x)$
at $x = \frac 1 {n+1}$ (modulo a correction factor):
$\mu_\text{coin}(L) = \frac 1 {n+1} \cdot g_{f_L}\left(\frac 1 {n+1}\right)$.
Consequently, the coin-flip measure of a regular language is rational,
and that of an unambiguous context-free language is algebraic
(following from the analogous, and more general,
facts about the respective generating functions
\cite{ChomskySchutzenberger:Algebraic:1963}).
Let $L, M \subseteq \Sigma_n^*$ be two languages with unambiguous concatenation $L \cdot M$.
Then
\begin{align}
    \label{eq:mu:LM}
    \mu_\text{coin}(L \cdot M) = (n+1) \cdot \mu_\text{coin}(L) \cdot \mu (M).
\end{align}

The \emph{coin-flip comparison problem} asks,
given a language $L \subseteq \Sigma^*$,
a rational threshold $0 \leq \varepsilon \leq 1$ encoded in binary,
and a comparison operator ${\sim} \in \set{\leq, <, >, \geq}$,
whether $\mu_\text{coin}(L) \sim \varepsilon$ holds.
The universality problem for $L$ is the special case
when $\varepsilon = 1$.
%
%
We now relate the coin-flip comparison problem to an open problem in numerical computing.
The \emph{\SQRTSUM problem} asks,
given $d_0, \dots, d_n \in \N$ encoded in binary
and a comparison operator ${\sim} \in \set{{\leq}, {<}, {>}, {\geq}}$,
whether
\footnote{
	In fact, the problem reduces to the case when ${\sim} = {\geq}$ is fixed.
	By doing binary search in the interval $\set{0, 1, \dots, n \cdot d}$,
	with only $O(\log (n\cdot d))$ queries to \eqref{eq:SQRTSUM}
	we can find the unique $\hat d_0 \in \N$
	\st $\hat d_0 \leq \sum_{i=1}^n \sqrt {d_i} \leq \hat d_0 + 1$.
	We can then solve $\sum_{i=1}^n \sqrt {d_i} \leq d_0$
	by checking $d_0 \leq \hat d_0 + 1$,
	and similarly for the other comparison operators.
}:
\begin{align}
	\label{eq:SQRTSUM}
	\sum_{i=1}^n \sqrt {d_i} \sim d_0.
\end{align}
This problem can be shown to be in \PSPACE
by deciding the existential formula
$\exists x_1, \dots, x_n \st x_1^2 = d_1 \land \cdots \land x_n^2 = d_n \land x_1 + \cdots + x_n \sim d_0$
over the reals \cite{Canny:1988:STOC:1988}.
It is a long-standing open problem in the theory of numerical computation
whether \SQRTSUM is in \NP,
or whether it is \NP-hard
\cite{AllenderBurgisserKjeldgaard-PedersenMiltersen:JC:2009,EtessamiYannakakis:JACM:2009}.

\begin{theorem}
    \label{thm:SQRTSUM-hardness}
    The coin-flip measure comparison problem is \SQRTSUM-hard for \UCFG.
\end{theorem}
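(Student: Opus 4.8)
The plan is to reduce an arbitrary instance $\sum_{i=1}^n\sqrt{d_i}\sim d_0$ of \SQRTSUM to a single comparison $\mu_{\text{coin}}(L(G))\sim'\varepsilon$ for a polynomial-size \UCFG $G$ and a rational $\varepsilon$, exploiting that the coin-flip measure of a \UCFG is the value of its (algebraic) generating function at $x=\tfrac1{|\Sigma|+1}=:\tfrac1M$ (we take $M$ even, by padding the alphabet), and that a suitable bracketed grammar realises a quadratic irrational whose radicand can be steered.

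For each $i$ I would introduce a gadget $X_i$ with the two productions $X_i\to a\,X_i\,c\,X_i$ and $X_i\to b\,W_i\,c$, where $W_i$ is an auxiliary \emph{regular} language over the sub-alphabet $\Sigma\setminus\{a,c\}$ (which has $M-2$ letters). This grammar is unambiguous — the $c$ closing the leading $a$ is pinned by the $a/c$-balance, which is well defined because $W_i$ avoids $a,c$ — and $L(X_i)$ is prefix-free, a fully-bracketed language being a prefix code. Its generating function satisfies $g_{X_i}(x)=x^2 g_{X_i}(x)^2+x^2 g_{W_i}(x)$, so $g_{X_i}(\tfrac1M)$ solves $u^2-M^2u+g_{W_i}(\tfrac1M)=0$, giving $\mu_{\text{coin}}(L(X_i))=\frac{M^2-\sqrt{M^4-4\,g_{W_i}(1/M)}}{2M}$. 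Since $g_{W_i}(\tfrac1M)=M\,\mu_{\text{coin}}(L(W_i))\in(0,M/2]$ (because $L(W_i)$ uses only $M-2$ letters), the radicand lies in $[M^4-2M,M^4)$; I would choose $W_i$ so that $M^4-4\,g_{W_i}(\tfrac1M)=(p_i/M^{k})^2 d_i$ for an integer $p_i$ and a common exponent $k$ of order $\log_M(\max_j d_j)$. This amounts to prescribing $\mu_{\text{coin}}(L(W_i))=\tfrac{M^{2k+4}-p_i^2 d_i}{4M^{2k+1}}$, a rational whose denominator divides a power of $M$, and one checks that with this $k$ the feasible window for $p_i$ has length $\ge 1$, so a $p_i$ of polynomial bit-size exists; then $\mu_{\text{coin}}(L(X_i))=\tfrac M2-\tfrac{p_i\sqrt{d_i}}{2M^{k+1}}$.

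To assemble $G$ I would prefix $X_i$ by a fresh letter $a_i$ and concatenate with a regular ``scaling'' language $R_i$ of measure $\tfrac{\lcm(p_1,\dots,p_n)/p_i}{M^{s}}$ with $s$ polynomial (note $p_i\mid\lcm(p_1,\dots,p_n)$, so this denominator is again a power of $M$). This is an \emph{unambiguous} concatenation, as $L(X_i)$ is prefix-free, so by \eqref{eq:mu:LM} one gets $\mu_{\text{coin}}(\{a_i\}L(X_i)R_i)=\mu_{\text{coin}}(L(X_i))\,\mu_{\text{coin}}(R_i)$, the two factors $|\Sigma|+1$ cancelling against $\mu_{\text{coin}}(\{a_i\})=(|\Sigma|+1)^{-2}$; the $p_i$'s cancel and this language has measure $\rho_i-\beta\sqrt{d_i}$ for an explicit rational $\rho_i$ and a coefficient $\beta>0$ \emph{independent of $i$}. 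The disjoint union $S\to S_1\mid\dots\mid S_n$ (unambiguous since the $a_i$ are distinct) then has $\mu_{\text{coin}}(L(G))=\alpha-\beta\sum_i\sqrt{d_i}$ for explicit positive rationals $\alpha,\beta$, whence $\sum_i\sqrt{d_i}\sim d_0$ iff $\mu_{\text{coin}}(L(G))\sim'\alpha-\beta d_0=:\varepsilon$, with $\sim'$ the reversal of $\sim$. A short estimate (from $p_i\sqrt{d_i}<M^{k+2}$ and by choosing $s$ large enough) shows $\varepsilon\in(0,1)$, after discarding the trivially-decided instances where $d_0$ exceeds the easy upper bound on $\sum_i\sqrt{d_i}$; all of $G,\varepsilon,\sim'$ are computable in polynomial time.

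The routine but genuinely delicate part — and the one I expect to be the main obstacle — is keeping every auxiliary language \emph{unambiguous} while still hitting the prescribed measures: realising an arbitrary polynomially-represented rational with power-of-$M$ denominator as the coin-flip measure of a polynomial-size deterministic automaton (a base-$M$ digit-expansion construction, with care for the leading digit since the target is $<\tfrac12$), carrying this out \emph{within the restricted alphabet} $\Sigma\setminus\{a,c\}$ for $W_i$, and verifying the prefix-freeness of $L(X_i)$. The tight arithmetic budget throughout is the identity $\mu_{\text{coin}}(L)\,\mu_{\text{coin}}(M)\,(|\Sigma|+1)=\mu_{\text{coin}}(L\cdot M)\le 1$ for unambiguous concatenations, which is precisely why the radicand is pinned near $M^4$ and why the auxiliary languages must be chosen as above rather than naively.
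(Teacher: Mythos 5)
Your reduction is sound in outline and rests on the same two pillars as the paper's proof: a quadratic self-referential \UCFG gadget whose coin-flip measure, being the least solution of the fixed-point equation induced via \eqref{eq:mu:LM}, is an affine expression in $\sqrt{d_i}$; and a polynomial-size unambiguous regular language over a restricted sub-alphabet realising a prescribed rational measure whose denominator divides a power of the base --- which is exactly the paper's Representation Lemma (\cref{lem:rep}), likewise deferred to the technical report --- with the sum of the square roots assembled by a union guarded by distinct fresh letters and a reversed threshold. The genuine difference is the normalisation: the paper sets $d=\max_i d_i$, pads it to a power of $(n+1)^2$, and writes the quadratic as $x_i=c_i+\tfrac12 x_i^2$ (the coefficient $\tfrac12$ realised by a nonterminal $A$ generating half the letters), so that $\mu(X_i)=1-\sqrt{d_i}/d$ comes out with one common scale $d$ and no further arithmetic; you instead fix the skeleton $X_i\to aX_icX_i\mid bW_ic$, steer the radicand to $p_i^2 d_i/M^{2k}$ by a number-theoretic choice of integers $p_i$ in a window of width $\geq 1$, and re-equalise the $\sqrt{d_i}$-coefficients with per-branch scaling languages of measure $(\lcm_j p_j/p_i)/M^{s}$. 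Your window estimate is correct and everything stays of polynomial size, but this buys nothing over the paper's route and costs the extra bookkeeping ($p_i$, $k$, $s$, the $\lcm$) that the single normalisation of $d$ avoids. Three small repairs: $\Sigma\setminus\{a,c\}$ has $M-3$ letters, not $M-2$, so the cap on $g_{W_i}(1/M)$ is $M/3$ (this only shifts $k$ by a constant); the unambiguity and prefix-freeness of $L(X_i)$ are not really an ``$a/c$-balance'' matter --- that invariant fails, since words of $L(X_i)$ have $\#_c-\#_a$ equal to their number of $b$-blocks, not $1$ --- the clean argument is that the grammar becomes a simple (LL(1)) grammar once $W_ic$ is folded into a DFA, every nonterminal's productions starting with distinct terminals, which yields unambiguity and prefix-freeness simultaneously; and you should justify taking the smaller root of the quadratic for $g_{X_i}(1/M)$ (least non-negative solution, by continuity from $x=0$), the same point the paper disposes of with its ``least non-negative solution'' remark.
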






In the rest of the section we prove the theorem above.
Let $d_0, \dots, d_n \in \N$ be the input to \SQRTSUM.
We assume w.l.o.g.~that $n$ is an odd number $\geq 3$.
We construct a rational constant $\varepsilon \in \Q$
and a \UCFG ${G = \tuple{\Sigma_n, N, X_0, \from}}$
over a $n$-ary alphabet $\Sigma_n = \set {a_1, \dots, a_n}$
and nonterminals $N$ containing $\set{X_0, \dots, X_n, C_1, \dots, C_n, A}$
plus some auxiliary nonterminals
(omitted for readability)
\st $\mu_\text{coin}(\lang G) \sim \varepsilon$ if, and only if, \eqref{eq:SQRTSUM} holds.
The principal productions of the grammar are:
\begin{align*}
	X_0 &\from a_1 \cdot X_1 \sep \cdots \sep a_n \cdot X_n, \\
    X_1 &\from C_1 \sep A \cdot X_1 \cdot a_n \cdot X_1, \\
        &\vdots \\
    X_n &\from C_n \sep A \cdot X_n \cdot a_n \cdot X_n.
\end{align*}
The remaining nonterminals $C_i$'s and $A$ will generate certain regular languages to be determined below.
Let $d = \max_{i = 1}^n d_i$.
For every $1 \leq i \leq n$,
let $x_i = 1 - \frac {\sqrt {d_i}} d$. 
It is easy to check that $x_i$
is the least non-negative solution of
\begin{align}
	\label{eq:xi}
	x_i	&= c_i + a \cdot x_i^2
	    \quad \text{where } c_i := \frac 1 2 \left(1 - \frac {d_i} {d^2}\right)
	    \text{ and } a := \frac 1 2 .
\end{align}
In the following, we write $\mu(X)$ for a non-terminal $X \in N$
as a shorthand for $\mu_\text{coin}(\lang X)$.
Since $\mu(a_1) = \cdots = \mu(a_n) = \frac 1 {(n+1)^2}$,
by \eqref{eq:mu:LM} we have
\begin{align}
    \label{eq:Xi}
    \mu(X_0) = \frac 1 {n+1} (\mu(X_1) + \cdots + \mu(X_n))
    \text{ and }
    \mu(X_i) = \mu(C_i) + (n+1) \cdot \mu(A) \cdot \mu(X_i)^2, i \in \set{1, \dots, n}.
\end{align}
We aim at obtaining $\mu(X_i) = x_i$.
By comparing \eqref{eq:Xi} with \eqref{eq:xi}
we deduce that the nonterminals $C_i$ and $A$
must generate languages of measure $\mu(C_i) = c_i$, resp., $\mu(A) = \frac a {n+1} = \frac 1 {2(n+1)}$.
Since the measures $a, c_i$ are rational,
it suffices to find regular languages $\lang A, \lang {C_i}$.
The main difficulty is to define these language as to ensure that $G$ is unambiguous
and of polynomial size.
In order to achieve this we further require that
\begin{inparaenum}[1)]
    \item $\lang A \subseteq \Sigma_{n-1}$ is a finite set of words of length $1$ (single letters)
    not containing letter $a_n$, and
    \item $\lang {C_i} \subseteq \Sigma_{n-1}^*$ is a set of words not containing letter $a_n$.
\end{inparaenum}
%

We first define $\lang A$.
Let
\begin{align*}
    A \from a_1 \sep \cdots \sep a_{\frac {n+1} 2}.    
\end{align*}
In order to avoid letter $a_n$,
we require $\frac {(n+1)} 2 \leq n-1$.
The latter condition is satisfied since we assumed $n \geq 3$.
Thus, $\lang A \subseteq \Sigma_{n-1}$ is finite,
contains only words of length $1$,
and has measure $\mu (A) = \frac {n+1} 2 \cdot \frac 1 {(n+1)^2} = \frac a {n+1}$, as required.

The definition of $\lang {C_i}$ of measure $\mu (C_i) = c_i$ is more involved.
In general, it is easy to construct a regular expression (or a finite automaton) recognising a language
of measure equal to a given rational number.
However, we have two constraints to respect:
\begin{inparaenum}[1)]
    \item we can use only letters from $\Sigma_{n-1}$, and
    \item the regular expression must have size polynomial in the bit encoding of $c_i$.
\end{inparaenum}
The first constraint entails an upper bound $\mu(\Sigma_{n-1}^*) = \frac 1 2$
on the maximal measure that $\lang {C_i}$ can have.
However, this is not a problem in our case since $c_i < \frac 1 2$ by definition.
The second constraint is handled by the following lemma.
A full proof is available in the technical report \cite{techrep}.
%
\begin{restatable}[Representation lemma]{lemma}{lemmaRepresentation}
	\label{lem:rep}
	Let $n + 1 \in \N$ with $n \geq 2$ be a base,
	let $m \in \N$ \st $1 \leq m \leq n$,
	and let $c \in \R$ with $0 \leq c \leq \frac 1 {n - m + 1}$ be a target rational measure
	written in reduced form as
		$c = \frac p q, \textrm{ with } p, q \in \N,\ p \leq q$.
	%
	There exists an unambiguous regular expression $e$
	using only letters from $\Sigma_m \subseteq \Sigma_n$ recognising a language of measure
    $\mu(\lang e) = c$.
	Moreover, if there exists $\ell \in \N$ \st $q \divides (n+1)^\ell$,
	then $e$ can be taken of size polynomial in $\log q$, $n$, and $\ell$.
\end{restatable}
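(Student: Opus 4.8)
The plan is to build the unambiguous regular expression $e$ by a positional-number-system argument in base $n+1$. First I would observe that a single letter $a_j$ with $1 \le j \le m$ has coin-flip measure $\frac{1}{(n+1)^2}$, so the language $\Sigma_m = \{a_1,\dots,a_m\}$ has measure $\frac{m}{(n+1)^2}$, and more generally a fixed word of length $\ell$ over $\Sigma_m$ has measure $\left(\frac{1}{n+1}\right)^{\ell+1}$. Concatenation of languages with unambiguous product multiplies measures up to the correction factor $n+1$ from \eqref{eq:mu:LM}, and a \emph{disjoint} union adds measures. So the basic building blocks are: (i) ``choose one of the first $m$ letters'' contributing a factor $\frac{m}{(n+1)^2}$, and (ii) iterate. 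The key datum is that $q \mid (n+1)^\ell$ for some $\ell$, so $c = \frac p q$ has a \emph{finite} base-$(n+1)$ expansion: writing $c = \sum_{i=1}^{\ell} \frac{c_i}{(n+1)^i}$ with digits $0 \le c_i \le n$ (this uses $c \le \frac{1}{n-m+1} \le 1$ so there is no integer part), we realise $c$ as the measure of a language stratified by length.

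The core construction is to realise ``digit $c_i$ at position $i$'' as follows. For each length $k$ I want a set $L_k \subseteq \Sigma_m^{\,k}$ of exactly $t_k$ words for a suitable $t_k$, chosen so that $\mu_{\text{coin}}(L_k) = t_k \cdot (n+1)^{-(k+1)}$ sums to $c$. The natural choice is the ``base-$(n+1)$ odometer'': think of words over $\Sigma_m$ as strings of digits in $\{1,\dots,m\}$; I would instead use the letters to encode digits in $\{0,\dots,n\}$ via a small gadget, namely a word of the form $a_{i_1}\cdots a_{i_r}$ can encode an integer in mixed radix. Concretely, since $m$ may be smaller than $n+1$, I encode one base-$(n+1)$ digit by a block of $\lceil \log_m (n+1)\rceil = O(1)$ letters (here $n$ is treated as part of the input so this is a constant-size-per-block, polynomially-many-blocks construction), and the total expression has size $O(\ell \cdot \mathrm{poly}(n))$, i.e. polynomial in $\log q$, $n$, and $\ell$ because $\ell$ is exactly the number of base-$(n+1)$ digits needed and $q \mid (n+1)^\ell$. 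The expression is a disjoint union over the positions $i$ of sub-expressions of the form $(\text{``length exactly } i-1\text{ prefix''})\cdot(\text{``one of } c_i \text{ chosen blocks''})$; making the prefixes range over pairwise different languages for different $i$ (e.g. pinning down the length) makes the union disjoint, hence unambiguous, and the within-block choice among $c_i$ alternatives is over distinct single blocks, hence also unambiguous.

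More carefully, I would formalise ``a language over $\Sigma_m$ with exactly $N$ words all of length $\ell$'' whenever $N \le m^\ell$: just take the first $N$ words in lexicographic order; this is an unambiguous regular expression of size $O(\ell \cdot m)$ built digit-by-digit (at each of the $\ell$ positions, branch on which letter, with the tail constrained accordingly — a standard ``ranking'' construction). Then $e := \bigcup_{i=1}^{\ell} e_i$ where $e_i$ recognises exactly $c_i \cdot (n+1)^{\ell - i}$ words of length $\ell$ — note $c_i (n+1)^{\ell-i} \le n (n+1)^{\ell - i} < (n+1)^{\ell-i+1} \le m^{\ell}$ fails in general, so instead I keep $e_i$ at length $i$ with exactly $c_i$ words when $c_i \le m^i$, and otherwise split the digit across a block of several positions; the constraint $c \le \frac{1}{n-m+1}$ guarantees the leading digits are small enough that the greedy block-splitting never overflows. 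The languages $\lang{e_i}$ consist of words of mutually distinct lengths (or distinguished by a length-marking final block), so the union is disjoint and $\mu(\lang e) = \sum_i \mu(\lang{e_i}) = \sum_i c_i (n+1)^{-(i+1)} = c$.

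The main obstacle I anticipate is the bookkeeping that simultaneously (a) avoids the forbidden letter $a_n$ — automatic here since we only ever use $\Sigma_m$ and $m \le n$, but one must double-check $m \le n$, which is hypothesised; (b) keeps all unions disjoint so that unambiguity is preserved under $\cup$ — handled by the distinct-length stratification; and (c) keeps the size polynomial — this is where $q \mid (n+1)^\ell$ is essential, since it bounds the number of digits (hence the number of union branches and the word lengths) by $\ell$ rather than by $q$ itself. The delicate point is the digit-to-block encoding when $m < n+1$: one must verify that $O(\log_m(n+1))$ letters per digit suffice and that the induced ``ranking'' expressions are genuinely unambiguous; this is routine but is the step most prone to off-by-one errors, and it is the reason the full proof is deferred to the technical report \cite{techrep}.
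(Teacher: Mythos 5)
The paper itself defers the proof of this lemma to the technical report, so there is no in-source proof to compare against; your plan (base-$(n+1)$ expansion of $c$, one stratum of words per digit position, unambiguity via length-disjoint unions, and a lexicographic ``first $N$ words of length $k$'' gadget of size $O(k\,m)$) is the natural route, and those individual ingredients are fine. The genuine gap is the one sentence doing all the work: that ``the constraint $c \le \frac 1{n-m+1}$ guarantees the leading digits are small enough that the greedy block-splitting never overflows.'' No argument is given for this, and it is false: the hypothesis only says that $c$ does not exceed the total measure $\mu_\text{coin}(\Sigma_m^*) = \frac 1{n+1-m}$, and it does not exclude gaps in the set of achievable measures when $m$ is small relative to $n$. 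Concretely, take $n=8$, $m=2$, $c=8/81$ (so $c \le 1/7$ and $q=81$ divides $9^2$): a word of length $k$ has measure $9^{-(k+1)}$, so a language over $\Sigma_2$ containing $\varepsilon$ has measure at least $1/9 > 8/81$, while one omitting $\varepsilon$ has measure at most $\sum_{k\ge 1} 2^k 9^{-(k+1)} = 2/63 < 8/81$; hence no language over $\Sigma_2$, regular or not, has measure exactly $8/81$, so no digit-splitting scheme can succeed. What the no-overflow argument actually requires is that each step $(n+1)^{-(k+1)}$ be at most the total measure $\frac{m^{k+1}}{(n+1)^{k+1}(n+1-m)}$ of all strictly longer words over $\Sigma_m$, i.e.\ $m \ge n+1-m$; so your construction can only work under the extra assumption $m \ge \frac{n+1}{2}$ (which does hold in the paper's application, $m=n-1$ with $n \ge 3$, and which the statement as given arguably needs as well).

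Even restricted to $m \ge \frac{n+1}{2}$, the step you wave at is exactly where the proof lives. You must fix the word-counts $t_k \le m^k$ per length (say greedily) and prove two things: (i) the remainder after length $k$ never exceeds the residual capacity $\frac{(m/(n+1))^{k+1}}{n+1-m}$ — an induction distinguishing whether the cap $m^k$ binds, using $m \ge n+1-m$ in the non-binding case; and (ii) only polynomially many lengths carry a nonzero count — while the cap binds one gets $\left(\frac{n+1}{m}\right)^K \le q\,(n+1-m)$, hence $K = O(n \log q) = O(n\,\ell \log (n+1))$, and after it stops binding the process is the ordinary base-$(n+1)$ expansion of a remainder whose denominator divides $(n+1)^{\max(\ell,K+1)}$, so it terminates within polynomially many further positions. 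Your paragraph instead oscillates between placing all words at length $\ell$ (which you yourself note overflows) and an unspecified ``block-splitting,'' so neither the correctness nor the size bound is actually established. Finally, the first claim of the lemma has no divisibility hypothesis: for a general rational $c$ in range the base-$(n+1)$ expansion is infinite but eventually periodic, and one needs an additional Kleene-star gadget for the periodic tail (with the same capacity caveat); your proposal does not address that case at all.
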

We apply Lemma~\ref{lem:rep} with $m := n-1$ and $c := c_i$
and obtain an unambiguous regular expression $e$
recognising a language $\lang e \subseteq \Sigma_{n-1}^*$ of measure $c_i$.
We now argue that $e$ can be taken of polynomial size.
In order to achieve this, we assume w.l.o.g.~that $d = (n+1)^{2h}$ for some $h$.
(This can be ensured by adding a new integer 
$d_{n+1} = (n+2)^{2h}$ for some $h$ large enough,
and by replacing $d_0$ with $d_0 + \sqrt{d_{n+1}} = d_0 + (n+2)^h$.)
Consequently, $c_i = \frac {\frac d 2 (d^2 - d_i)} {(n+1)^{4h}} = \frac p q$
with $p, q \in \N$ relatively prime and $q \divides (n+1)^{4h}$,
and thus $e$ has polynomial size by taking $\ell = 4h$ in the lemma.
%
%
The set of polynomially many production rules for nonterminal $C_i$
is derived immediately from the regular expression $e$
by adding some auxiliary nonterminals.
Moreover, since $e$ is unambiguous, the same applies to the rules for $C_i$.
This completes the description of the grammar $G$.

\begin{lemma}
    The grammar $G$ is unambiguous.
\end{lemma}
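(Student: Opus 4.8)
The plan is to show that no word has two distinct $X_0$-derivation trees, working from the start symbol down through the recursive structure of the grammar. First I would observe that the productions $X_0 \from a_1 X_1 \mid \cdots \mid a_n X_n$ are prefixed by pairwise-distinct letters $a_1, \dots, a_n$, so the first letter of a word deterministically selects which alternative (hence which $X_i$) was used at the root; thus it suffices to prove that each $X_i$ (for $1 \leq i \leq n$) is unambiguous. Next I would examine the two productions for a fixed $X_i$, namely $X_i \from C_i$ and $X_i \from A \cdot X_i \cdot a_n \cdot X_i$. The key structural facts I would invoke are: (1) $\lang A \subseteq \Sigma_{n-1}$ consists of single letters, none equal to $a_n$; (2) $\lang{C_i} \subseteq \Sigma_{n-1}^*$ avoids the letter $a_n$; and (3) by construction every word of $\lang{X_i}$ that uses the recursive production contains at least one occurrence of $a_n$ (the explicit $a_n$ in the middle), whereas every word derived via $X_i \from C_i$ avoids $a_n$ entirely. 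Hence the presence or absence of $a_n$ in $w$ determines which of the two top-level productions for $X_i$ was applied, so the choice is unambiguous at every level.

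It then remains to handle the recursive case: given $w \in \lang{X_i}$ containing $a_n$, I must show the factorisation $w = \alpha \cdot w_1 \cdot a_n \cdot w_2$ with $\alpha \in \lang A$, $w_1, w_2 \in \lang{X_i}$ is unique. The letter $\alpha$ is the first letter of $w$ and is forced (a single symbol). For the split of the remaining suffix $w' = w_1 \cdot a_n \cdot w_2$, the natural argument is that the distinguished middle $a_n$ is the \emph{first} occurrence of $a_n$ in $w'$: indeed $w_1 \in \lang{X_i}$ might itself contain $a_n$'s, so I should be careful — I would instead argue by induction on $|w|$ that the middle $a_n$ is identifiable, e.g. as the occurrence of $a_n$ such that the prefix before it lies in $\lang{X_i}$ and, among all such occurrences, it is the leftmost (or rightmost) one; unambiguity of $\lang{X_i}$ on the strictly shorter words $w_1$ and $w_2$ then gives uniqueness of the whole tree by the induction hypothesis. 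Finally I would invoke the Representation Lemma (Lemma~\ref{lem:rep}) to note that the subgrammars generating $\lang{C_i}$ and $\lang A$ are themselves unambiguous, and that the auxiliary nonterminals introduced do not interfere (they generate disjoint sublanguages used only inside $C_i$, $A$), so ambiguity cannot be introduced there.

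The main obstacle I anticipate is precisely the uniqueness of the factorisation $w' = w_1 \cdot a_n \cdot w_2$ when $w_1$ or $w_2$ themselves contain occurrences of $a_n$: a naive "first $a_n$" rule need not coincide with the grammatical split, so one must pin down the correct combinatorial characterisation of the distinguished $a_n$. I expect the cleanest route is a counting/structural argument on the number of $a_n$-occurrences together with the fact that each recursive application of $X_i$ contributes exactly one "fresh" $a_n$ beyond those inside its two $X_i$-subtrees, making the parse tree's shape recoverable from the positions of the $a_n$'s in $w$. The remaining steps — disjointness of the two top-level alternatives, determinism of the root choice, and unambiguity of the regular sublanguages — are routine given the constraints already imposed in the construction.
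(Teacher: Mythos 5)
Your plan matches the paper's opening moves --- the root choice is forced by the first letter $a_i$, and the presence or absence of $a_n$ separates the alternative $X_i \from C_i$ from the recursive one --- but it stops exactly where the lemma's actual content begins: the uniqueness of the factorisation $w = \alpha\, w_1\, a_n\, w_2$ in the production $X_i \from A\cdot X_i\cdot a_n\cdot X_i$. You flag this yourself as the main obstacle and offer two candidate strategies (choose the leftmost/rightmost occurrence of $a_n$ whose preceding prefix lies in $\lang{X_i}$; or a counting argument based on each recursive application contributing one fresh $a_n$), but neither is carried out, and in fact neither can be carried out from the facts (1)--(3) you allow yourself. Concretely, those facts do not imply unambiguity: take $\lang A=\set{a_1}$ and suppose $\lang{C_i}\supseteq\set{\varepsilon,a_1}$ (both permitted by (1)--(2)). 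Then $w=a_1a_1a_na_1a_n$ has two derivation trees from $X_i$: one as $a_1\cdot(a_1)\cdot a_n\cdot(a_1\cdot\varepsilon\cdot a_n\cdot\varepsilon)$ and one as $a_1\cdot(a_1\cdot\varepsilon\cdot a_n\cdot a_1)\cdot a_n\cdot(\varepsilon)$. Both parses use the same number of occurrences of $a_n$ (so counting them cannot recover the tree shape), and both candidate middle occurrences are preceded by a prefix in $\lang{X_i}$ (so no leftmost/rightmost selection rule can disambiguate). Hence the step you defer as ``routine given the constraints already imposed'' is not routine: a complete proof must exploit finer properties of the languages $\lang{C_i}$ actually produced via Lemma~\ref{lem:rep}, or a genuinely structural factorisation argument, rather than just $a_n$-freeness of $\lang{C_i}$ and the single-letter shape of $\lang A$.

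For comparison, at this point the paper argues by factoring every $w\in\lang{X_i}$ at the \emph{last} occurrence of $a_n$ (so that the suffix is $a_n$-free and hence forced into $\lang{C_i}$), iterating to obtain the normal form $w\in\lang{A^s\cdot C_i\cdot(a_n\cdot C_i)^s}$ with $s$ the number of occurrences of $a_n$ in $w$, and then invoking unambiguity of that concatenation, which rests on $A$-words having fixed length $1$ and $C_i$-words avoiding $a_n$. Your proposal neither reaches such a normal form nor substitutes any other concrete identification of the top-level $a_n$, so as it stands it is an outline with the decisive step missing.
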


\begin{proof}
    Since $\lang {G} = \lang {X_0}$
    is the union of languages $\lang{a_1 \cdot X_1}, \dots, \lang{a_n \cdot X_n}$,
    and the latter are disjoint,
    it suffices to show that the $\lang{X_i}$'s are recognised unambiguously.
    Let $w \in \lang {X_i}$.
    If $w$ does not contain any $a_n$, then necessarily $w \in \lang {C_i}$.
    Otherwise, let $w = u a_n v$ where $v$ does not contain any $a_n$.
    Thus $v \in \lang {C_i}$ and $u \in \lang {A \cdot X_i}$.
    Since $A$ produces only words of fixed length,
    $u = xw'$ unambiguously with $x \in A$ and $w' \in \lang {X_i}$.
    This argument shows that for any $w \in \lang{X_i}$
    if we let $s$ be the number of $a_n$ in $w$,
    then $w \in \lang{A^s \cdot C_i \cdot (a_n \cdot C_i)^s}$.
    Since $A$ produces words of fixed length and $C_i$ does not produce any word containing $a_n$,
    the latter concatenation is unambiguous and thus $w$ is produced unambiguously by $X_i$.
    %
\end{proof}

Let $\varepsilon := \frac 1 {n+1} \left(n - \frac {d_0} d \right)$.
The following lemma states the correctness of the reduction.
\begin{lemma}
    We have $\mu(\lang G) \sim \varepsilon$ if, and only if, \eqref{eq:SQRTSUM} holds.
\end{lemma}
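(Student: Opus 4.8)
The plan is to compute $\mu(\lang G)$ explicitly in terms of the $x_i$, substitute the definition of $\varepsilon$, and verify that the resulting inequality is exactly \eqref{eq:SQRTSUM}, \emph{with the same comparison operator}. First I would establish that $\mu(X_i) = x_i$ for each $i \in \set{1, \dots, n}$. By the grammar construction and \eqref{eq:Xi}, $\mu(X_i) = \mu(C_i) + (n+1) \cdot \mu(A) \cdot \mu(X_i)^2$. Having set $\mu(C_i) = c_i$ and $\mu(A) = \frac{1}{2(n+1)}$ via the Representation Lemma and the definition of $\lang A$, this becomes $\mu(X_i) = c_i + \frac 1 2 \mu(X_i)^2$, which is precisely \eqref{eq:xi} with $a = \frac 1 2$. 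Since $\mu(X_i)$ is a genuine measure it is non-negative, so it must be the \emph{least} non-negative solution of that quadratic (generating functions of grammars compute least fixed points), and hence $\mu(X_i) = x_i = 1 - \frac{\sqrt{d_i}}{d}$.

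Next I would compute $\mu(\lang G) = \mu(X_0)$. By the first equation of \eqref{eq:Xi},
\begin{align*}
    \mu(\lang G) = \mu(X_0) = \frac{1}{n+1}\sum_{i=1}^n \mu(X_i) = \frac{1}{n+1}\sum_{i=1}^n \left(1 - \frac{\sqrt{d_i}}{d}\right) = \frac{1}{n+1}\left(n - \frac{1}{d}\sum_{i=1}^n \sqrt{d_i}\right).
\end{align*}
Then I would compare this against the threshold $\varepsilon = \frac{1}{n+1}\left(n - \frac{d_0}{d}\right)$. The condition $\mu(\lang G) \sim \varepsilon$ unfolds, after multiplying both sides by $n+1$ and then by $-d$, to a comparison between $\sum_{i=1}^n \sqrt{d_i}$ and $d_0$.

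The key subtlety, and the main obstacle, is tracking how the comparison operator transforms under this unfolding. The measure $\mu(\lang G)$ carries $\sum_i \sqrt{d_i}$ with a \emph{negative} sign (it appears as $-\frac{1}{d}\sum_i\sqrt{d_i}$), and likewise $\varepsilon$ carries $d_0$ negatively. So $\mu(\lang G) \sim \varepsilon$ is equivalent to $-\frac{1}{d}\sum_i\sqrt{d_i} \sim -\frac{d_0}{d}$, i.e.\ to $-\sum_i\sqrt{d_i} \sim -d_0$, which flips the operator to $\sum_i\sqrt{d_i} \sim' d_0$ where $\sim'$ is the reverse of $\sim$. To obtain the stated equivalence with the \emph{same} operator as in \eqref{eq:SQRTSUM}, the reduction must therefore map an instance of \SQRTSUM with operator $\sim$ to an instance of the coin-flip comparison problem with the \emph{reversed} operator $\sim'$; equivalently, since \SQRTSUM is closed under reversing $\sim$ (each of $\leq, <, >, \geq$ has its reverse in the same set, and by the footnote after \eqref{eq:SQRTSUM} the problem reduces to the single fixed operator $\geq$), I would state the reduction so that this sign flip is absorbed into the choice of operator. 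The cleanest presentation is to fix the comparison on the \SQRTSUM side to $\geq$ using that footnote, carry out the unfolding to see $\mu(\lang G) \sim \varepsilon \iff \sum_i \sqrt{d_i} \,\sim'\, d_0$, and observe that choosing $\sim$ to be the reverse of the target operator yields exactly \eqref{eq:SQRTSUM}; verifying that this operator bookkeeping is consistent across all four operators is the routine-but-delicate part that I would spell out in a short table or a single line per case.
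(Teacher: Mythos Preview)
Your proposal is correct and follows the same computation as the paper: both derive $\mu(X_0) = \frac{1}{n+1}\bigl(n - \frac{1}{d}\sum_i \sqrt{d_i}\bigr)$ from $\mu(X_i) = x_i$ and compare with $\varepsilon$. You are in fact more careful than the paper on two points it glosses over---the least-fixed-point justification for $\mu(X_i) = x_i$ and the sign flip of the comparison operator---neither of which affects the hardness reduction since both problems range over all four operators in $\set{\leq,<,>,\geq}$.
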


\begin{proof}
    Since $x_i = 1 - \frac {\sqrt {d_i}} d$,
    we have 
    %
    	$\mu(X_0) = 
    		\mu(a_1 \cdot X_1) + \cdots + \mu(a_n \cdot X_n) 
    		= (n+1) (\mu(a_1) \cdot \mu (X_1) + \cdots + \mu(a_n) \cdot \mu(X_n))
    		= \frac 1 {n+1} (\mu(X_1) + \cdots + \mu(X_n))
    		= \frac 1 {n+1} (x_1 + \cdots + x_n)
    		= \frac 1 {n+1} \left(\left(1 - \frac {\sqrt {d_1}} d\right) + \cdots + \left(1 - \frac {\sqrt {d_n}} d\right)\right)
    		= \frac 1 {n+1} \left(n - \frac {\sqrt{d_1} + \cdots + \sqrt{d_n}} d \right)$,
    %
    and thus $\sum_{i=1}^n\sqrt{d_i} \sim d_0$ if, and only if, $\mu(\lang X) \sim \varepsilon$,
    as required.
\end{proof}
%

\section{Discussion}

We have shown novel \PSPACE upper bounds
for several inclusion problems on \UCFG and finite automata.
We did not address language equivalence problems $L = M$,
which in principle can be easier to decide than the corresponding inclusions.
For instance, while $\DCFG \subseteq \DCFG$ is undecidable \cite[Theorem 10.7, Point 2]{HopcroftUllman:1979},
$\DCFG = \DCFG$ is decidable by the result of G. Sénizergues \cite{Senizergues:ICALP:1997}.
It is worth remarking that decidability of the equivalence problem $\UCFG = \UCFG$ is not known.
In fact, this is a special case of the \emph{multiplicity equivalence problem} for \CFG,
which asks whether two \CFGs have the same number of derivations for every word they accept.
Decidability of the latter problem is open as well \cite{Kuich:multiplicity:1994}
and inter-reducible with the language equivalence for probabilistic pushdown automata
\cite{ForejtJancarKieferWorrell:IC:2014}.
The restriction of the $\UCFG = \UCFG$ equivalence problem
to words of a given length has been studied in \cite{Litow:AC:1996}.

\paragraph{Number sequences and the zeroness problem.}

We obtained the \PSPACE upper bound for \UUCFG by reducing to the zeroness problem for conv-rec sequences.
Conv-rec sequences generalise linear difference recurrence with constant coefficients
(a.k.a.~\emph{constant-recursive} or \emph{C-finite} \cite{KauersPaule:C-finite:2011};
c.f.~also \cite{BarloyFijalkowLhoteMazowiecki:CSL:2020} and citations therein)
by allowing the convolution product in the recurrence.
They are a special case of more expressive classes
such as P-recursive \cite[Ch.~7]{KauersPaule:C-finite:2011} (a.k.a.~\emph{holonomic})
and polynomial recursive sequences \cite{CadilhacMazowieckiPapermanPilipczukSenizergues:LICS:2020}.
%
%
The zeroness problem for P-recursive sequences is decidable \cite{Zeilberger:JCAM:1990}
and the same holds for polynomial recursive sequences
(as a corollary of the existence of cancelling polynomials \cite[Theorem 11]{CadilhacMazowieckiPapermanPilipczukSenizergues:LICS:2020}).
However, no complexity upper bounds are known for those more general classes.


\paragraph{Coin-flip measure.}

As a complement to the \PSPACE upper bound for \UUCFG,
we have shown that the coin-flip measure comparison problem
$\mu_\text{coin}(\lang G) \sim \varepsilon$ of a \UCFG $G$
with ${\sim} \in \set{{\leq}, {<}, {\geq}, {>}}$ and $0 \leq \varepsilon \leq 1$
is \SQRTSUM-hard.
The main difficulty is that the measure is generated according to a fixed stochastic process.
If we relax this constraint and generate the measure according to an arbitrary finite Markov process,
then one can obtain \SQRTSUM-hardness already for \DCFG.

It is known that the quantitative decision problem for $\mu_G(\Sigma^*)$
where $G$ is a stochastic context-free grammar (\SCFG)
is \SQRTSUM-hard \cite{EtessamiYannakakis:JACM:2009}. 
Our setting is incomparable:
On the one hand we fix a particular measure,
namely the coin-flip measure $\mu_{\text{coin}}$
(which corresponds to a fixed \SCFG with rules
$X \from \varepsilon \sep a_1 \cdot X \sep \cdots \sep a_n \cdot X$).
On the other hand, we are interested in the quantity $\mu_\text{coin}(\lang G)$
where $G$ is an arbitrary \UCFG
(and thus not necessarily universal).
%

We leave it as an open problem to establish the exact complexity of the universality problem for \UCFG
and the coin-flip measure 1 problem.
When the system of polynomial equations obtained from the grammar is \emph{probabilistic} (PPS%
\footnote{The sum of all coefficients is at most 1.})
the measure 1 problem is in \PTIME \cite{EtessamiYannakakis:JACM:2009}
(and even in strongly polynomial time \cite{EsparzaGaiserKiefer:STACS:2010}).
However, the equations obtained from \UCFG are monotone (MPS) but not PPS in general.
As an example, consider a singleton alphabet $\Sigma = \set a$
and productions of the form
$X_0 \from a$ and, for $n \geq 0$,
$X_{n+1} \from X_n \cdot X_n$.
The corresponding MPS system is $x_0 = \frac 1 {2^2}$
and $x_{n+1} = 2 \cdot x_n^2$.
The former system is not a PPS, since in the second equation the coefficients sum up to $2$.
It may be argued that by the change of variable $z_n := 2 \cdot x_n$
we obtain the system $z_0 = \frac 1 2$ and $z_{n+1} = z_n^2$
which is PPS.
However, this transformation reduces the value 1 problem on the original MPS
to the value $1/2$ problem in the new PPS,
and the latter problem is not known to be in \PTIME.

One source of difficulty in the \UUCFG problem is that witnesses of non-universality can have exponential length.
Extending the previous example,
consider the additional rules $Y_0 \from \varepsilon$
and $Y_{n+1} \from Y_n \sep X_n \cdot Y_n$.
The nonterminal $X_n$ generates a single word $\lang {X_n} = \set{a^{2^n}}$ of length $2^n$.
It can be verified by induction that $Y_n$ generates all words
$\lang{Y_n} = \set{a^0, a^1, \dots, a^{2^n - 1}}$
of length $\leq 2^n - 1$,
and consequently the grammar is unambiguous.
Thus $\lang {Y_n}$ is not universal,
however the shortest witness has length $2^n$.
In terms of measures,
$\mu_\text{coin}(\lang {X_n}) = \frac 1 {2^{2^n+1}}$
and $\mu_\text{coin}(\lang {Y_n}) = 1 - \frac 1 {2^{2^n+1}}$,
and thus \UCFG have measures that can be exponentially close to 0, resp., to 1.
%
Since a word of length $n$ over a unary alphabet has measure $\frac 1 {2^{n+1}} = 2^{-O(n)}$,
if language $L$ is not universal $\mu_\text{coin}(L) < 1$,
then there is a non-universality witness of length at most $\log (1 - \mu_\text{coin}(L))$.
Thus upper bounds on $1-\mu_\text{coin}(L)$ yield upper bounds on the shortest non-universality witness.

\paragraph{The ``$\CFG \subseteq \UFA$'' problem.}

We have shown that $\CFG \subseteq \UFA$ reduces to
$\DCFG \subseteq \UFA$ and, in turn,
the latter reduces to \UUCFG
and thus can be solved in \PSPACE.
This needs not be optimal and there are reasons to suspect that better algorithms may be obtained.
If we interpret a \DCFG $G$ as a stochastic context-free grammar (\SCFG),
then query $\lang G \subseteq \lang A$ is equivalent to $\mu_G(\lang A) = 1$ when $A$ is unambiguous,
where $\mu_G$ is the measure generated by $G$ (a generalisation of the coin-flip measure).
When $A$ is \DFA, $\mu_G(\lang A)$ can be approximated in \PTIME \cite{EtessamiStewartYannakakis:ICALP:2013}.
Generalising this result for $A$ being \UFA would put $\DCFG \subseteq \UFA$ in \PTIME.

\paragraph{The regularity problem for \UCFG.}

There are other problems which are known to be undecidable for \CFG but decidable for \DCFG,
such as the regularity problem \cite{Stearns:Regularity:DPDA:IC:1967,Valiant:Regularity:DPDA:JACM:1975,Shankar:Regularity:DPDA:TCS:1991}.
An interesting open problem \cite{DiekertKopecki:CIAA:2010}
is whether the regularity problem is decidable for \UCFG.

\paragraph{Acknowledgements.}

I warmly thank Alberto Pettorossi
for his encouragement and guidance during my first steps in doing research.
I also thank an anonymous reviewer for his helpful comments on a preliminary version of this draft.

\bibliographystyle{eptcs}
\bibliography{bibliography}

\end{document}